\newcommand{\T}{\top}
\newcommand{\pr}{\mathrm{pr}}
\newcommand{\Mob}{M\"{o}bius}
\newcommand{\ci}{\mbox{$\perp \! \! \! \perp$}}
\newtheorem{thm}{Theorem}[section]
\newtheorem{lem}[thm]{Lemma}
\newtheorem{prop}[thm]{Proposition}
\newtheorem{cor}[thm]{Corollary}
\newtheorem{defn}{Definition}[section]
\newtheorem{exmp}{Example}[section]
\newenvironment{proof}{\begin{trivlist}\item[] \mbox{\textit{Proof.}}}
{\hfill$\Box$ \end{trivlist}}
\begin{document}
\title{Log-mean linear regression models for binary responses with an application to multimorbidity}
\author{Monia Lupparelli\footnote{Department of Statistical Sciences, University of Bologna, IT
(\texttt{monia.lupparelli@unibo.it})} \hspace{0.1cm} and \hspace{0.01cm}
Alberto Roverato\footnote{Department of Statistical Sciences, University of Bologna, IT
(\texttt{alberto.roverato@unibo.it})}
}

\date{March 30, 2016}
\maketitle

\begin{abstract}
In regression models for categorical data a linear model is typically related to the response variables via a transformation of probabilities called the link function.  We introduce an approach based on two link functions for binary data named \emph{log-mean} (LM) and \emph{log-mean linear} (LML), respectively. The choice of the link function plays a key role for the interpretation of the model, and our approach is especially appealing in terms of interpretation of the effects of covariates on the association of responses.
Similarly to Poisson regression, the LM and LML regression coefficients of single outcomes are log-relative risks,
and we show that the relative risk interpretation is maintained also in the regressions of the association of responses. Furthermore,
certain collections of zero LML regression coefficients imply that the relative risks for joint responses factorize with respect to the
corresponding relative risks for marginal responses.  This work is motivated by the analysis of a dataset obtained from a case-control study
aimed to investigate the effect of HIV-infection on \emph{multimorbidity}, that is simultaneous presence of two or more noninfectious commorbidities in one patient.
\bigskip

\noindent\emph{Keywords}:
Categorical data; Link function; Multimorbidity pattern; Relative risk; Response association.

\end{abstract}

\section{Introduction}\label{sec.intro}
%-------------------------------------------------
In many research fields it is required to model the dependence of a collection of response variables on one or more explanatory variables. \citet[][Sec.~6.5]{mcc-nel:1989} specified that in this context there are  typically three lines of inquiry: (i) the dependence structure of each response marginally on covariates, (ii) a model for the joint distribution of all responses and (iii) the joint dependence of response variables on covariates. When in a regression model responses are categorical,  a linear model is typically related to the response variables via a transformation of probabilities called the link function, and the choice of the link function plays a key role for the interpretation of the model along the lines (i) to (iii). We refer to \citet{tutz2011regression} and \citet{agresti2013categorical} for a full account of regression models for categorical data; see also \citet[Section~5]{Ekh-al:2000} for a review of some link functions commonly used in the binary case.

This work is motivated by a research aimed to investigate the effect of HIV-infection on \emph{multimorbiditiy} which is defined as the co-occurrence of two or more chronic medical conditions in one person. It is well known that multimorbidity is associated with age and, furthermore, that HIV-infected patients experience an increased prevalence of noninfectious comorbidities, compared with the general population.  \citet{Guaraldi-al:2011} considered a dataset obtained from a cross-sectional retrospective case-control study, and investigated the effect of HIV-infection on the prevalence of a set of noninfectious chronic medical conditions by applying an univariate regression to each response. However, multimorbidity is characterised by complex interactions of co-existing diseases and to gain relevant insight it is necessary to use a multivariate approach aimed to investigate the effect of HIV on the way different chronic conditions associate. The main scientific objective of this study is thus the line of enquiry (iii). However, to the best of our knowledge, this line has never been explicitly addressed in the literature, and this paper is fully devoted to this issue.

The application we consider naturally requires a \emph{marginal modelling} approach because the main interest is for the effect of HIV on the marginal association of subsets of comorbidities; see \citet[][Chapter~13]{tutz2011regression} and \citet[][Chapter~12]{agresti2013categorical}.
For this reason, we focus on the case where the link function satisfies \emph{upward compatibility}, that is every association term among responses can be computed in the relative marginal distribution.
In this way, the parameterization of the response variables will include terms that can be regarded as \emph{single outcomes}, computed marginally on univariate responses, and terms that can be regarded as \emph{association outcomes}, hereafter referred to as \emph{response associations}, which are computed marginally on subsets of responses.
Regression models typically include coefficients encoding the effect of the covariates, as well as of interactions of covariates, on response associations and difficulties involve both the interpretation of the response associations and the interpretation of the relevant regression coefficients. More seriously, the effect of a covariate on a response association might be \emph{removable} in the sense that it disappears when a different link function is used. It is therefore crucially relevant to be able to define models with interpretable regressions coefficients; see \citet{berrington2007interpretation} for a review on statistical interactions, with emphasis on interpretation.

In marginal modelling a central role is played by the \emph{multivariate logistic} regression model because it maintains a marginal logistic regression interpretation for the single outcomes \citep{mcc-nel:1989,glo-mcc:1995}. Nevertheless,  this family of regression models does not provide a satisfying answer to the  multimorbidity application. This is due to the fact that, although the regression coefficients for the single outcomes can be interpreted in terms of odds ratios, this feature does not translate to the higher order regressions where both the response association and the relevant regression coefficients are high-level log-linear parameters which are difficult to interpret.

We consider two different parameterizations, the \emph{log-mean} (LM) \citep{drton2008binary,drton2009discrete} and the \emph{log-mean linear} (LML) parameterization \citep{rov-lup-lar:2013,roverato2015log}, and investigate the use of these parameterizations as link functions. In this way, we introduce an approach where, similarly to Poisson regression, regression coefficients can be interpreted in terms of relative risks. Furthermore, and more interestingly, the relative risk interpretation can be extended from the regressions of the single outcomes to the regressions of the response associations, thereby providing interpretable coefficients. The LM and the LML links can be used to specify the same classes of submodels but the LML link has the advantage that relevant submodels can be specified by setting regression coefficients to zero.  Specifically, we show that certain collections of zero LML regression coefficients imply that the relative risks for joint responses factorize with respect to the corresponding relative risks for marginal responses.

The paper is organized as follows. In Section~\ref{SEC:multimorbidity} we describe the motivating problem concerning the analysis of the multimorbidity data.
Section~\ref{sec:back} gives the background concerning the theory of regression for multivariate binary responses, as required for this paper. Section~\ref{sec.LM-LML} introduces the  LM and the LML regression models and describes the relevant properties of these models.  The analysis of multimorbidity data is carried out in Section~\ref{sec:app} and, finally,  Section~\ref{sec:discuss} contains a  discussion.

\section{Motivating problem: multimorbidity in HIV-positive patients}\label{SEC:multimorbidity}
%-------------------------------------------------------
Antiretroviral therapy (ART) for human immunodeficiency virus (HIV) infection has been a great medical success story. Nowadays, in countries with good access to treatment, clinical AIDS is no longer the inevitable outcome of HIV infection and this disease, previously associated with extremely high mortality rates, is now generally thought of
as a chronic condition \citep{Mocroft200322,may2011life}. Despite a marked increase in life expectancy, mortality rates among HIV-infected persons remain higher than those seen in the general population. Some of the excess mortality observed among HIV-infected persons can be directly attributed to illnesses that occur as a consequence of immunodeficiency, however, more than half of the deaths observed in recent years among ART-experienced HIV infected patients are attributable to noninfectious comorbidities (NICMs) \citep{phillips2008role,Guaraldi-al:2011}.

\emph{Multimorbidity} is defined as the co-occurrence of two or more chronic medical conditions in one person that is, for HIV positive patients, as the simultaneous presence of two or more NICMs.  Multimorbidity, which is associated with age, is perhaps the most common \lq\lq{}disease pattern\rq\rq{} found among the elderly and, for this reason, it is turning into a major medical issue for both individuals and health care providers \citep{marengoni2011aging}. It is well known that HIV-infected patients experience an increased prevalence of NICMs, compared with the general population, and it has been hypothesized that such increased prevalence is the result of premature aging of HIV-infected patients \citep{deeks2009hiv,shiels2010age,Guaraldi-al:2011}.
Multimorbidity is characterised by the co-occurrence of NICMs and therefore investigating the effect of HIV-infection on multimorbidity requires to investigate the effect of HIV on the way different chronic conditions associate.

The dataset we analyse here comes from a study of \citet{Guaraldi-al:2011} who investigated the effect of HIV-infection on the prevalence of a set of  noninfectious chronic medical conditions. Data were obtained from a cross-sectional retrospective case-control study with sample size $n=11\,416$ (2854 cases and 8562 controls). Cases were ART-experienced HIV-infected patients older than 18 years of age who were consecutively enrolled at the Metabolic Clinic of Modena University in Italy from 2002 to 2009. Control subjects were matched according to age, sex, race (all white), and geographical area. The observed variables include both a set of binary (response) variables encoding the presence of NICMs of interest and a set of context and clinical covariates.  See \citet{Guaraldi-al:2011} for details and additional references.
\section{Background and notation}\label{sec:back}
\subsection{\Mob\ inversion}\label{subsec:mob}
In this subsection we introduce the notation used for matrices and recall a well known result named \emph{\Mob\ inversion} that will be extensively used in the following.

For two finite sets $V$ and $U$, with $|V|=p$ and $|U|=q$, we write $\theta=\{\theta_{D}(E)\}_{D\subseteq V, E\subseteq U}$ to denote a $2^{p}\times 2^{q}$ real matrix with rows and columns indexed by the subsets of $V$ and $U$, respectively. Furthermore, we will write $\theta(E)$ to denote the column of $\theta$ indexed by $E\subseteq U$ and $\theta_{D}$ to denote the row of $\theta$ indexed by $D\subseteq V$. Note that the notation we use may be easier to read if one associates $D$ with $D$iseases and $E$ with $E$xposure.
\begin{exmp}[Matrix notation.]\label{EXA:matrix.notation}
For the case $V=\{b, c, d\}$ and $U=\{h, a\}$ the matrix $\theta$ has eight rows indexed by the subsets $\emptyset$, $\{b\}$, $\{c\}$, $\{d\}$ $\{b, c\}$, $\{b, d\}$, $\{c, d\}$, $\{b, c, d\}$ and four columns indexed by the subsets $\emptyset$, $\{h\}$, $\{a\}$, $\{h, a\}$. The matrix $\theta$, with row and column indexes, is given below; note that we use the suppressed notation $\theta_{bc}(ha)$ to denote $\theta_{\{b,c\}}(\{h, a\})$, and similarly for the other quantities.
\begin{eqnarray*}
\begin{array}{ll}
\begin{array}{llll}
\hspace{1.3cm}\emptyset & \hspace{.7cm}\{h\} & \hspace{.7cm}\{a\} & \hspace{.4cm}\{h,a\}
\end{array}
& ~ \\[2mm]
\theta=
\left(
\begin{array}{cccc}
  \theta_{\emptyset}(\emptyset) & \theta_{\emptyset}(h) & \theta_{\emptyset}(a) & \theta_{\emptyset}(ha) \\
  \theta_{b}(\emptyset) & \theta_{b}(h) & \theta_{b}(a) & \theta_{b}(ha) \\
  \theta_{c}(\emptyset) & \theta_{c}(h) & \theta_{c}(a) & \theta_{c}(ha) \\
  \theta_{d}(\emptyset) & \theta_{d}(h) & \theta_{d}(a) & \theta_{d}(ha) \\
  \theta_{bc}(\emptyset) & \theta_{bc}(h) & \theta_{bc}(a) & \theta_{bc}(ha)\\
  \theta_{bd}(\emptyset) & \theta_{bd}(h) & \theta_{bd}(a) & \theta_{bd}(ha)\\
  \theta_{cd}(\emptyset) & \theta_{cd}(h) & \theta_{cd}(a) & \theta_{cd}(ha)\\
  \theta_{bcd}(\emptyset) & \theta_{bcd}(h) & \theta_{bcd}(a) & \theta_{bcd}(ha)
\end{array}
\right)
&\hspace*{-3mm}
\begin{array}{l}
\emptyset\\
\{b\} \\
\{c\} \\
\{d\} \\
\{b,c\} \\
\{b,d\} \\
\{c,d\} \\
\{b,c,d\}
\end{array}
\end{array}%
\end{eqnarray*}
Furthermore, the rows and columns of $\theta$ are denoted as follows.
\begin{eqnarray*}
\theta =
\left(
\begin{array}{l}
\theta_{\emptyset}\\
\theta_{b}\\
\theta_{c}\\
\theta_{d}\\
\theta_{bc}\\
\theta_{bd}\\
\theta_{cd}\\
\theta_{bcd}\\
\end{array}
\right)
=
\left(
\begin{array}{llll}
\theta(\emptyset) & \theta(h) & \theta(a) & \theta(ha)
\end{array}
\right)
\end{eqnarray*}
\end{exmp}

This matrix notation is not standard in the literature concerning categorical data but, in the case of regression models for binary data, it allows us to provide a compact representation of model parameters in matrix form and to compute alternative parameterizations, as well as regression coefficients, by direct application of \Mob\ inversion.

Let $\omega$ be another real matrix indexed by the subsets of  $V$ and $U$. For a subset $D\subseteq V$ \Mob\ inversion states that
\begin{equation}\label{EQN.mon.Mobius}
\theta_{D}(E)=\sum_{E^{\prime} \subseteq E} \omega_{D}(E^{\prime}),\;\; \forall E \subseteq U
\quad\Leftrightarrow\quad \omega_{D}(E)=\sum_{E^{\prime} \subseteq E} (-1)^{|E\backslash E^{\prime}|} \theta_{D}(E^{\prime}),\;\; \forall E \subseteq U;
\end{equation}
see, among others, \citet[][Appendix~A]{lau:1996}. Let $Z_{U}$ and $M_{U}$ be two $(2^{q} \times 2^q)$ matrices
with entries indexed by the subsets of $U \times U$ such that the entry of $Z$ indexed by the pair $E,H\subseteq U$ is equal to $1(E \subseteq H)$ and the corresponding entry of $M$ is equal to $(-1)^{|H \backslash E|} 1(E \subseteq H)$, where $1(\cdot)$ denotes the indicator function. Then, the equivalence~(\ref{EQN.mon.Mobius}) can be written in matrix form as
$\theta_{D}= \omega_{D}Z_{U}\; \Leftrightarrow\; \omega_{D} =\theta_{D}M_{U}$
and \Mob\ inversion follows by noticing that $M_{U}=Z^{-1}_{U}$. Note that it is straightforward to extend this result to the matrices $\omega$ and $\theta$ as
\begin{equation}\label{EQN.mon.Mobius.matrix}
\theta= \omega Z_{U} \quad \Leftrightarrow\quad \omega =\theta M_{U}
\end{equation}
and, furthermore, that it makes sense to consider \Mob\ inversion also with respect to the columns of $\omega$ and $\theta$ so that $\theta= Z_{V}^{\T}\omega \; \Leftrightarrow\; \omega = M^{\T}_{V}\theta$.
\subsection{Multivariate binary response models}\label{sec.mult-bin-mod}
%-------------------------------------------------
Let $Y_{V}=(Y_{v})_{v\in V}$ be a binary random vector of response variables with entries indexed by $V$ and $X_{U}=(X_{u})_{u\in U}$ a vector of binary covariates with entries indexed by $U$. Without loss of generality, we assume that $Y_{V}$ and $X_{U}$ take value in $\{0,1\}^{p}$ and $\{0,1\}^{q}$, respectively.  The values of covariates denote different observational or experimental conditions and we assume that, for every $x_{U}\in \{0,1\}^{q}$, the distribution of $Y_{V}|(X_{U}=x_{U})$ is multivariate Bernoulli. Furthermore, we assume that the latter distributions are independent across conditions and that, when $X_{U}$ is regarded as a random vector, then also $(Y_{V}, X_{U})$ follows a multivariate Bernoulli distribution. We can write the probability distributions of $Y_{V}|X_{U}$ by means of a matrix $\pi=\{\pi_{D}(E)\}_{D\subseteq V, E\subseteq U}$ where, for every $E\subseteq U$, the column vector $\pi(E)$ is the probability distribution of $Y_{V}$ given $(X_{E}=1_{E},X_{U\backslash E}=0_{U\backslash E})$ and, more specifically, $\pi_{D}(E)=\pr(Y_{D}=1_{D},Y_{V\backslash D}=0_{V\backslash D}\mid X_{E}=1_{E},X_{U\backslash E}=0_{U\backslash E})$. We assume that  all the entries of $\pi$ are strictly positive. In the following, for a subset $D\subseteq V$ we use the suppressed notation $Y_{D}=1$ to denote $Y_{D}=1_{D}$ and similarly for $Y_{D}=0$ and the subvectors of $X_{U}$. Given three random vectors $X$, $Y$ and $Z$, we write $X\ci Y|Z$ to say that $X$ is independent of $Y$ given $Z$ \citep{dawid1979conditional} or, in the case where $Y$ and $Z$ are not random, that the conditional distribution of $X$ given $Y$ and $Z$ does not depend on $Y$.

In regression models for categorical responses a linear regression is typically related to the response variables via a link function $\theta(\pi)$. All the link functions considered in this paper are such that, for every $E\subseteq U$, the vector $\theta(E)=\theta(\pi(E))$ parameterizes the distribution of $Y_{V}|(X_{E}=1,X_{U\backslash E}=0)$.
In this way, the link function induces a matrix $\theta=\{\theta_{D}(E)\}_{D\subseteq V, E\subseteq U}$ and the associated linear regression, in the saturated case, has form
\begin{eqnarray}\label{EQN:betadef}
\theta_{D}(E)=\sum_{E^{\prime}\subseteq E} \beta^{\langle\theta\rangle}_{D}(E^{\prime})
\qquad\mbox{for every }D\subseteq V, E\subseteq U
\end{eqnarray}
where $\beta^{\langle\theta\rangle}=\{\beta^{\langle\theta\rangle}_{D}(E)\}_{D\subseteq V, E\subseteq U}$ is a matrix of regression coefficients. It follows from
(\ref{EQN.mon.Mobius}) and (\ref{EQN.mon.Mobius.matrix}) that (\ref{EQN:betadef}) can be written in matrix form as
\begin{eqnarray}\label{EQN:beta-theta}
\theta=\beta^{\langle\theta\rangle}Z_{U}\qquad\mbox{so that}\qquad \beta^{\langle\theta\rangle}=\theta M_{U}.
\end{eqnarray}
The regression setting (\ref{EQN:betadef}) involves multivariate combinations of both the responses and the covariates and for this reason it is important to explicitly distinguish between the $D$-\emph{response associations} which are given by $\theta_{D}$ for $D\subseteq V$ with $|D|>1$, and the $E$-\emph{covariate interactions} given by $\beta^{\langle\theta\rangle}(E)$ for $E\subseteq U$ with $|E|>1$. Hence,  $\beta_D^{\langle\theta\rangle}(E)$ encodes the effect of the $E$-covariate interaction on the $D$-response association, given a fixed level of $X_{U \setminus E}$ that, without loss of generality, in our approach is the zero level.
\begin{exmp}[Matrix notation continued.]
In all the examples of this paper we use the variables of the multimorbidity data as given in Section~\ref{sec:app}. Specifically, we consider three of the response variables, which are $Y_{b}=\;$\emph{Bone fracture}, $Y_{c}=\;$\emph{Cardiovascular disease} and $Y_{d}=\;$\emph{Diabetes}, with level 1 encoding the presence of the disease, and the most relevant covariates, that is, $X_{h}=\,$\emph{HIV} with the level 1 encoding the presence of the infection and $X_{a}=\,$\emph{Age} with the value 1 for patients aged 45 or more.
Hence, if the link function is the matrix $\theta$ given in Example~\ref{EXA:matrix.notation}, then the matrix of regression coefficients $\beta^{\langle\theta\rangle}$ is
\begin{eqnarray*}
\beta^{\langle\theta\rangle}=
\left(
\begin{array}{llll}
  \beta^{\langle\theta\rangle}_{\emptyset}(\emptyset) & \beta^{\langle\theta\rangle}_{\emptyset}(h) & \beta^{\langle\theta\rangle}_{\emptyset}(a) & \beta^{\langle\theta\rangle}_{\emptyset}(ha) \\[1.5mm]
  \beta^{\langle\theta\rangle}_{b}(\emptyset) & \beta^{\langle\theta\rangle}_{b}(h) & \beta^{\langle\theta\rangle}_{b}(a) & \beta^{\langle\theta\rangle}_{b}(ha) \\[1.5mm]
  \beta^{\langle\theta\rangle}_{c}(\emptyset) & \beta^{\langle\theta\rangle}_{c}(h) & \beta^{\langle\theta\rangle}_{c}(a) & \beta^{\langle\theta\rangle}_{c}(ha) \\[1.5mm]
  \beta^{\langle\theta\rangle}_{d}(\emptyset) & \beta^{\langle\theta\rangle}_{d}(h) & \beta^{\langle\theta\rangle}_{d}(a) & \beta^{\langle\theta\rangle}_{d}(ha) \\[1.5mm]
  \beta^{\langle\theta\rangle}_{bc}(\emptyset) & \beta^{\langle\theta\rangle}_{bc}(h) & \beta^{\langle\theta\rangle}_{bc}(a) & \beta^{\langle\theta\rangle}_{bc}(ha)\\[1.5mm]
  \beta^{\langle\theta\rangle}_{bd}(\emptyset) & \beta^{\langle\theta\rangle}_{bd}(h) & \beta^{\langle\theta\rangle}_{bd}(a) & \beta^{\langle\theta\rangle}_{bd}(ha)\\[1.5mm]
  \beta^{\langle\theta\rangle}_{cd}(\emptyset) & \beta^{\langle\theta\rangle}_{cd}(h) & \beta^{\langle\theta\rangle}_{cd}(a) & \beta^{\langle\theta\rangle}_{cd}(ha)\\[1.5mm]
  \beta^{\langle\theta\rangle}_{bcd}(\emptyset) & \beta^{\langle\theta\rangle}_{bcd}(h) & \beta^{\langle\theta\rangle}_{bcd}(a) & \beta^{\langle\theta\rangle}_{bcd}(ha)
\end{array}
\right)
=
\left(
\begin{array}{llll}
  \beta^{\langle\theta\rangle}_{\emptyset}      \\[1.5mm]
  \beta^{\langle\theta\rangle}_{b}   \\[1.5mm]
  \beta^{\langle\theta\rangle}_{c}   \\[1.5mm]
  \beta^{\langle\theta\rangle}_{d}   \\[1.5mm]
  \beta^{\langle\theta\rangle}_{bc}  \\[1.5mm]
  \beta^{\langle\theta\rangle}_{bd}  \\[1.5mm]
  \beta^{\langle\theta\rangle}_{cd}  \\[1.5mm]
  \beta^{\langle\theta\rangle}_{bcd}
\end{array}
\right)
\end{eqnarray*}
Equation (\ref{EQN:betadef}) states that there exists a one-to-one \Mob\ inversion relationship between every row $\theta_{D}$ of $\theta$ and the corresponding row $\beta^{\langle\theta\rangle}_{D}$ of $\beta^{\langle\theta\rangle}$.
For instance, $\theta_{b}$ is the link function of the marginal distribution of bone fracture, and each of its entries, $\theta_{b}(E)$ for $E\subseteq U$, can be computed by considering $\beta_{b}^{\langle\theta\rangle}$ and then by taking the sum of its entries which are indexed by a subset of $E$. More specifically, for $E=\{h\}$ it holds that $\theta_{b}(h)=\beta^{\langle\theta\rangle}_{b}(\emptyset)+\beta^{\langle\theta\rangle}_{b}(h)$
whereas for $E=\{h, a\}$ it holds that
$\theta_{b}(ha)=\beta^{\langle\theta\rangle}_{b}(\emptyset)+\beta^{\langle\theta\rangle}_{b}(h)+\beta^{\langle\theta\rangle}_{b}(a)+\beta^{\langle\theta\rangle}_{b}(ha)$.
Hence, $\beta^{\langle\theta\rangle}_{b}(h)$ and $\beta^{\langle\theta\rangle}_{b}(ha)$ encode the effect of HIV and of the interaction of HIV and age, respectively, on bone fracture. Similarly,  $\theta_{bd}$ is the response association of bone fracture and diabetes. Hence, for $E=\{h\}$ it holds that $\theta_{bd}(h)=\beta^{\langle\theta\rangle}_{bd}(\emptyset)+\beta^{\langle\theta\rangle}_{bd}(h)$
whereas for $E=\{h, a\}$ it holds that
$\theta_{bd}(ha)=\beta^{\langle\theta\rangle}_{bd}(\emptyset)+\beta^{\langle\theta\rangle}_{bd}(h)+\beta^{\langle\theta\rangle}_{bd}(a)+\beta^{\langle\theta\rangle}_{bd}(ha)$.
Here, $\beta^{\langle\theta\rangle}_{bd}(h)$ and $\beta^{\langle\theta\rangle}_{bd}(ha)$ encode the effect of HIV and of the interaction of HIV and age, respectively, on the $\{b,d\}$-response association $\theta_{bd}$.
\end{exmp}

There exists an extensive literature on models for categorical data analysis but, remarkably,
\citet{lang:1996}, extending previous work by \citet{lang:agre:94}, introduced a very general method to specify regression models for categorical data thereby defining an extremely broad class of models  named \emph{generalized log-linear models}. This includes, as special cases, many of the existing models for multiple categorical responses such as log-linear and, more generally, marginal log-linear models \citep{ber-al:2009}. In particular, in marginal regression modelling a relevant instance within this class is obtained when $\theta(\cdot)$ is the  multivariate logistic link function denoted by $\eta(\cdot)$ \citep{mcc-nel:1989,glo-mcc:1995,mole-lesa:1999,bar-al:07,ber-al:2009,mar-lup:2011}. This induces the parameterization  $\eta=\{\eta_{D}(E)\}_{D\subseteq V, E\subseteq U}$ where $\eta_{D}(E)$ is the $|D|$-way log-linear interaction computed in the margin $Y_{D}|(X_{E}=1,X_{U\backslash E}=0)$ and, more specifically, $\eta_{\{v\}}$ is the usual logistic link for every $v\in V$. \citet{mar-lup:2011} considered the regression framework (\ref{EQN:betadef}) with $\theta=\eta$ and showed that $\beta^{\langle\eta\rangle}_{D}(E)$ is the $|D\cup E|$-way log-linear interaction computed in the distributions of $(Y_{D}, X_{U})$.

\begin{exmp}[Multivariate logistic regression]
Consider the case $V=\{b, c\}$ and $U=\{h\}$.
The multivariate logistic parameters are
\begin{eqnarray*}
\eta_{i}(\emptyset)=\log\frac{\pr(Y_{i}=1|X_{h}=0)}{1-\pr(Y_{i}=1|X_{h}=0)}
\qquad\mbox{and}\qquad
\eta_{i}(h)=\log\frac{\pr(Y_{i}=1|X_{h}=1)}{1-\pr(Y_{i}=1|X_{h}=1)}
\end{eqnarray*}
for $i\in \{b, c\}$, which are logit links. Furthermore, if we denote by $OR(Y_{bc}|\emptyset)$ and by $OR(Y_{bc}|h)$ the odds ratio between bone fracture and cardiovascular disease for HIV-negative and HIV-positive patients, respectively, then
\begin{eqnarray*}
\eta_{bc}(\emptyset)=\log OR(Y_{bc}|\emptyset)
\qquad\mbox{and}\qquad
\eta_{bc}(h)=\log OR(Y_{bc}|h)
\end{eqnarray*}
so that the coefficients for the regression of $\eta_{bc}$ are
\begin{eqnarray*}
\beta^{\langle\eta\rangle}_{bc}(\emptyset)=\log OR(Y_{bc}|\emptyset)
\qquad\mbox{and}\qquad
\beta^{\langle\eta\rangle}_{bc}(h)=\log\frac{OR(Y_{bc}|h)}{OR(Y_{bc}|\emptyset)}.
\end{eqnarray*}
\end{exmp}

We now move from the saturated model to submodels defined by means of linear constraints on the regression coefficients and, more specifically, to submodels with regression coefficients equal to zero. If for every  subset of $U$, denoted as $E$, that has non-empty intersection with $U^\prime\subseteq U$, it holds that $\beta^{\langle\theta\rangle}_{D}(E)=0$, then the covariates $X_{U^{\prime}}$ have no-effect on $\theta_{D}$. The following lemma states a connection between no-effect of a subset of variables and linear constraints in $\theta$.
\begin{lem}\label{THM:lemma.beta.theta}
Let $\theta=\{\theta_{D}(E)\}_{D\subseteq V,E\subseteq U}$ be a real matrix with entries indexed by two nonempty sets $V$ and $U$. If $\beta^{\langle\theta\rangle}=\theta M_{U}$ and $U^{\prime}\subseteq U$ then the following are equivalent, for every $D\subseteq V$, as they both say that $X_{U^{\prime}}$ has no effect on $\theta_{D}$;
\begin{itemize}
\item[(i)] $\beta^{\langle\theta\rangle}_{D}(E)=0$ for every $E\subseteq U$ such that $E\cap U^{\prime}\neq\emptyset$;
\item[(ii)] $\theta_{D}(E_{1})=\theta_{D}(E_{2})$ for every $E_{1},E_{2}\subseteq U$ such that $E_{1}\backslash U^{\prime}=E_{2}\backslash U^{\prime}$.
\end{itemize}
\end{lem}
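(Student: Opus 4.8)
The plan is to exploit the row-wise nature of the relation $\beta^{\langle\theta\rangle}=\theta M_{U}$: the $D$-th row $\beta^{\langle\theta\rangle}_{D}$ depends only on the $D$-th row $\theta_{D}$, through the \Mob\ pair $\theta_{D}(E)=\sum_{E^{\prime}\subseteq E}\beta^{\langle\theta\rangle}_{D}(E^{\prime})$ and $\beta^{\langle\theta\rangle}_{D}(E)=\sum_{E^{\prime}\subseteq E}(-1)^{|E\backslash E^{\prime}|}\theta_{D}(E^{\prime})$ recorded in~(\ref{EQN.mon.Mobius}). Hence I would fix an arbitrary $D\subseteq V$ and treat $\theta_{D}$ and $\beta^{\langle\theta\rangle}_{D}$ as two real-valued functions on the subsets of $U$ linked by \Mob\ inversion; since the whole argument is uniform in $D$, proving the equivalence for a fixed $D$ suffices. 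The first thing I would note is that~(ii) is just a restatement of the property ``$\theta_{D}(E)$ depends on $E$ only through $E\backslash U^{\prime}$''.

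For the implication (i)$\Rightarrow$(ii) I would argue directly from the forward sum. If $\beta^{\langle\theta\rangle}_{D}(E^{\prime})=0$ whenever $E^{\prime}\cap U^{\prime}\neq\emptyset$, then in $\theta_{D}(E)=\sum_{E^{\prime}\subseteq E}\beta^{\langle\theta\rangle}_{D}(E^{\prime})$ only the terms with $E^{\prime}\cap U^{\prime}=\emptyset$ survive, and for $E^{\prime}\subseteq E$ this is exactly the condition $E^{\prime}\subseteq E\backslash U^{\prime}$. Thus $\theta_{D}(E)=\sum_{E^{\prime}\subseteq E\backslash U^{\prime}}\beta^{\langle\theta\rangle}_{D}(E^{\prime})$, which manifestly depends only on $E\backslash U^{\prime}$, giving~(ii).

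The converse (ii)$\Rightarrow$(i) is the step I expect to require the only genuine idea, namely a sign-cancellation (pairing) argument applied to the inverse \Mob\ formula. Fix $E$ with $E\cap U^{\prime}\neq\emptyset$ and choose some $u\in E\cap U^{\prime}$. I would split the sum defining $\beta^{\langle\theta\rangle}_{D}(E)$ by pairing each $E^{\prime}\subseteq E$ with $u\notin E^{\prime}$ against $E^{\prime}\cup\{u\}$. Two facts make the paired contributions cancel: first, $|E\backslash(E^{\prime}\cup\{u\})|=|E\backslash E^{\prime}|-1$, so the two \Mob\ signs are opposite; second, because $u\in U^{\prime}$ we have $(E^{\prime}\cup\{u\})\backslash U^{\prime}=E^{\prime}\backslash U^{\prime}$, whence~(ii) forces $\theta_{D}(E^{\prime}\cup\{u\})=\theta_{D}(E^{\prime})$. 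Each pair therefore contributes $(-1)^{|E\backslash E^{\prime}|}\theta_{D}(E^{\prime})-(-1)^{|E\backslash E^{\prime}|}\theta_{D}(E^{\prime})=0$, and summing over the pairs yields $\beta^{\langle\theta\rangle}_{D}(E)=0$, which is~(i).

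The main obstacle is simply making the pairing in the last paragraph airtight: one must verify that $E^{\prime}\mapsto E^{\prime}\cup\{u\}$ is a bijection between the subsets of $E$ avoiding $u$ and those containing $u$, so that every term of the sum is accounted for exactly once and none is double-counted. Everything else reduces to routine bookkeeping with \Mob\ inversion, and the final statement ``as they both say that $X_{U^{\prime}}$ has no effect on $\theta_{D}$'' is precisely the common reformulation identified at the outset.
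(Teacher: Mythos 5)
Your proposal is correct and follows essentially the same route as the paper: the forward implication restricts the \Mob\ sum to subsets of $E\backslash U^{\prime}$, and the converse picks $u\in E\cap U^{\prime}$ and cancels the inverse \Mob\ sum in pairs $E^{\prime}$ versus $E^{\prime}\cup\{u\}$, which is exactly the paper's grouping $\sum_{E^{\prime}\subseteq E\backslash\{u\}}(-1)^{|E\backslash E^{\prime}|}\{\theta_{D}(E^{\prime})-\theta_{D}(E^{\prime}\cup\{u\})\}=0$. The bijection you flag as the remaining obstacle is immediate, so nothing is missing.
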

\begin{proof}
See Appendix A.
\end{proof}
\section{Log-mean and log-mean linear regression models}\label{sec.LM-LML}
%-------------------------------------------------
The multivariate Bernoulli distribution belongs to the natural exponential family, and the mean parameter associated with the distribution $\pi(E)$ is $\mu(E)$ where $\mu_{D}(E)=\pr(Y_{D}=1|X_{E}=1, X_{U\backslash E}=0)$ for $D\subseteq V$. \citet{drton2008binary} used the mean parameter to parameterize graphical models of marginal independence and called it the  \Mob\ parameter because $\mu(E)=Z_{V}\pi(E)$ for every $E\subseteq V$. Subsequently, \citet{drton2009discrete} used the matrix  $\mu=Z_{V}\pi$  to parameterize regression graph models; see also \citet{rov-lup-lar:2013} and \citet{roverato2015log}.  The mean parameterization has the disadvantage that
submodels of interest are defined by, non-linear, multiplicative constraints. \citet{rov-lup-lar:2013} introduced the log-mean linear parameter $\gamma$ defined as a log-linear expansion of the mean parameters, formally,
\begin{eqnarray}\label{EQN:definition.of.gamma}
  \gamma=M^{\T}_{V}\log(\mu),
\end{eqnarray}
and showed that this approach improves on the mean parameterization as submodels of interest can be specified by setting certain zero log-mean linear interactions. We remark that both the mean and the log-mean linear parameters are not variation independent, in the sense that  setting some parameters to particular values may restrict the valid range of other parameters. As a consequence, unlike variation independent parameterizations, it might be more difficult to interpret separately each parameter.

Here, we show that the analysis of multimorbidity data can be effectively approached by  applying the theory described in the previous section to the log-mean (LM) and the log-mean linear (LML) parameterizations to develop the LM and the LML regression model, respectively.

\subsection{Log-mean regression}
%-------------------------------------------------
The LM regression model is obtained by setting $\theta$ in (\ref{EQN:betadef}) equal to the logarithm of the mean parameter, $\log(\mu)$, so that, in the saturated case, $\log(\mu)=\beta^{\langle\mu\rangle}Z_{U}$ that,
in its extended form, is
\begin{eqnarray}\label{EQN:mob.regression}
  \log\mu_{D}(E)=\sum_{E^{\prime}\subseteq E}\beta^{\langle\mu\rangle}_{D}(E^{\prime})\quad\mbox{ for every }D\subseteq V, E\subseteq U
\end{eqnarray}
where  for $D=\emptyset$ the equation (\ref{EQN:mob.regression}) is trivial for every $E\subseteq U$ because $\mu_{\emptyset}$ is a row vector of ones so that both $\log(\mu_{\emptyset})=0$ and $\beta^{\langle\mu \rangle}_{\emptyset}=\log(\mu_{\emptyset})M_{U}=0$.

LM regression can be regarded as one of the possible alternative ways to parameterize the distribution of $Y_{V}|X_{U}$, but it is of special interest for the application considered in this paper. This can be seen by noticing that one can write $\mu_{D}(E)=\pr(Y^{D}=1|X_{E}=1, X_{U\backslash E}=0)$ where $Y^{D}=\prod_{v\in D}Y_{v}$ is the binary random variable associated with the multimorbidity pattern $D$. More specifically, $\pr(Y^{D}=1)$ is the probability that the multimorbidity pattern $D$ occurs, and therefore (\ref{EQN:mob.regression}) is a sequence of regressions corresponding to the univariate binary responses $Y^{D}$ for $\emptyset\neq D\subseteq V$.

Similarly to Poisson regression, the parameters of LM regression
can be interpreted in terms of relative risks. We first consider the case $U=\{u\}$, so that $|U|=1$.
Hence, if we  denote the relative risk of an event $\mathcal{E}$ with respect to the two groups identified by $X_{u}=1$ and $X_{u}=0$ by
\begin{eqnarray}\label{EQN:relative.risk.factorization.002}
RR_{u}(\mathcal{E})=
\frac{\pr(\mathcal{E}|X_{u}=1)}
{\pr(\mathcal{E}|X_{u}=0)}
\end{eqnarray}
then for every $v\in V$ it holds that
$\beta^{\langle\mu\rangle}_{v}(u)=\log RR_{u}(Y_{v}=1)$ whereas, more generally, for $D\subseteq V$
\begin{eqnarray}\label{EQN.rel.risk.U1}
  \beta^{\langle\mu\rangle}_{D}(u)=\log RR_{u}(Y^{D}=1)=\log RR_{u}(\cap_{v\in D} \{Y_{v}=1\})
\end{eqnarray}
where we use the convention that $RR_{u}(Y^{\emptyset}=1)=1$.
\begin{exmp}[LM regression]\label{EXA:mob.reg.001}
Consider the case where $V=\{b, c, d\}$ and $U=\{h\}$. Then the saturated LM regression model contains a regression equation for every $D\subseteq V$, where the case $D=\emptyset$ is trivial. For $|D|=1$, equation (\ref{EQN:mob.regression}) has form
\begin{eqnarray*}
\log\mu_{i}(\emptyset)=\log \pr(Y_{i}=1|X_{h}=0)&=&\beta^{\langle\mu\rangle}_{i}(\emptyset)\\
\log\mu_{i}(h)=\log \pr(Y_{i}=1|X_{h}=1)&=&\beta^{\langle\mu\rangle}_{i}(\emptyset)+\beta^{\langle\mu\rangle}_{i}(h)
\end{eqnarray*}
where, for every $i\in V$, $\beta^{\langle\mu\rangle}_{i}(h)=\log RR_{h}(Y_i=1)$ that is the log-relative risk of the disease $Y_{i}$ for HIV-positive patients compared to HIV-negative patients. For $|D|=2$,
\begin{eqnarray*}
\log\mu_{ij}(\emptyset)=\log\pr(Y_{i}=1, Y_{j}=1|X_{h}=0)&=&\beta^{\langle\mu\rangle}_{ij}(\emptyset)\\
\log\mu_{ij}(h)=\log\pr(Y_{i}=1, Y_{j}=1|X_{h}=1)&=&\beta^{\langle\mu\rangle}_{ij}(\emptyset)+\beta^{\langle\mu\rangle}_{ij}(h)
\end{eqnarray*}
where, for every $i,j\in V$ with $i\neq j$,  $\beta^{\langle\mu\rangle}_{ij}(h)=\log RR_{h}(Y^{ij}=1)$ that is  the log-relative risk of the co-occurrence of the diseases $Y_{i}$ and $Y_{j}$ for HIV-positive patients compared to HIV-negative patients. Finally, for $|D|=3$ so that $D=V$,
\begin{eqnarray*}
\log\mu_{bcd}(\emptyset)=\log\pr(Y_{b}=1,Y_{c}=1, Y_{d}=1|X_{h}=0)&=&\beta^{\langle\mu\rangle}_{bcd}(\emptyset)\\
\log\mu_{bcd}(h)=\log\pr(Y_{b}=1,Y_{c}=1, Y_{d}=1|X_{h}=1)&=&\beta^{\langle\mu\rangle}_{bcd}(\emptyset)+\beta^{\langle\mu\rangle}_{bcd}(h)
\end{eqnarray*}
where $\beta^{\langle\mu\rangle}_{bcd}(h)=\log RR_{h}(Y^{bcd}=1)$ is the log-relative risk of the co-occurrence of the three diseases  for HIV-positive patients compared to HIV-negative patients.
\end{exmp}

Consider the regression equation relative to the subset $\emptyset\neq D\subseteq V$. It follows from (\ref{EQN:mob.regression}) that if for an element $u\in U$ it holds that $\beta^{\langle\mu\rangle}_{D}(E)=0$ for every $E\subseteq U$ such that $u\in E$, then $Y^{D}\ci X_{u}|X_{U\backslash \{u\}}$. This can be easily extended to a subset of covariates $X_{U^{\prime}}$ with $\emptyset\neq U^{\prime}\subseteq U$, whereas to generalize this result to the vector $Y_{D}$ it is necessary to consider a collection of regression equations as shown below.
\begin{prop}\label{THM:ind.equal.beta.zero}
Let $Y_{D}$ be the subvector of $Y_{V}$ indexed by $\emptyset\neq D\subseteq V$. For a subset $\emptyset\neq U^{\prime}\subseteq U$, it holds that $Y_{D}\ci X_{U^{\prime}}|X_{U\backslash U^{\prime}}$ if and only if $\beta^{\langle\mu\rangle}_{D^{\prime}}(E)=0$ for every $D^{\prime}\subseteq D$ and $E\subseteq U$ such that $E\cap U^{\prime}\neq\emptyset$.
\end{prop}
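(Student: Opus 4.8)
The plan is to reduce the statement to a row-by-row application of Lemma~\ref{THM:lemma.beta.theta} with $\theta=\log(\mu)$, after first translating the conditional independence into equalities among the mean parameters $\mu_{D^{\prime}}$ for $D^{\prime}\subseteq D$.

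First I would recall that, by \Mob\ inversion on the response side, for each fixed $E\subseteq U$ one has $\pi(E)=M_{V}\,\mu(E)$, so that the conditional distribution of $Y_{V}$ given $(X_{E}=1,X_{U\backslash E}=0)$ is a bijective function of the column $\mu(E)$. Restricting attention to the subvector $Y_{D}$, its conditional law given the covariate pattern indexed by $E$ is determined by, and in turn determines, the mean parameters $\{\mu_{D^{\prime}}(E):D^{\prime}\subseteq D\}$, since these are exactly the upper-orthant probabilities $\pr(Y_{D^{\prime}}=1\mid X_{E}=1,X_{U\backslash E}=0)$ of $Y_{D}$ and \Mob\ inversion over the subsets of $D$ recovers the full marginal law from them.

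Next I would observe that, as $E$ ranges over the subsets of $U$, the patterns $(X_{E}=1,X_{U\backslash E}=0)$ exhaust all $2^{q}$ configurations of $X_{U}$, and the restriction of such a pattern to $U\backslash U^{\prime}$ is encoded precisely by $E\backslash U^{\prime}$. Hence $Y_{D}\ci X_{U^{\prime}}\mid X_{U\backslash U^{\prime}}$ holds if and only if the conditional law of $Y_{D}$ is unchanged across all patterns sharing the same restriction to $U\backslash U^{\prime}$; by the previous step this is equivalent to $\mu_{D^{\prime}}(E_{1})=\mu_{D^{\prime}}(E_{2})$, and hence, taking logarithms (a bijection on the strictly positive entries of $\mu$), to $\log\mu_{D^{\prime}}(E_{1})=\log\mu_{D^{\prime}}(E_{2})$, for every $D^{\prime}\subseteq D$ and every $E_{1},E_{2}\subseteq U$ with $E_{1}\backslash U^{\prime}=E_{2}\backslash U^{\prime}$.

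Finally I would apply Lemma~\ref{THM:lemma.beta.theta} with $\theta=\log(\mu)$ separately to each row $D^{\prime}\subseteq D$: its condition (ii) is exactly the equality just obtained, while its condition (i) is exactly $\beta^{\langle\mu\rangle}_{D^{\prime}}(E)=0$ for all $E$ with $E\cap U^{\prime}\neq\emptyset$. Forming the conjunction over all $D^{\prime}\subseteq D$ then yields the claimed equivalence in both directions. I expect the main obstacle to be the first step, namely the clean justification that the conditional independence of the \emph{entire} subvector $Y_{D}$, rather than merely of the individual multimorbidity indicators $Y^{D^{\prime}}$, is captured by the equalities of \emph{all} mean parameters $\mu_{D^{\prime}}$ with $D^{\prime}\subseteq D$; once this \Mob-inversion argument on $V$ is secured, the remainder is a direct invocation of the lemma.
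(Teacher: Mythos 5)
Your proposal is correct and follows essentially the same route as the paper's own proof: both translate the conditional independence $Y_{D}\ci X_{U^{\prime}}\mid X_{U\backslash U^{\prime}}$ into the equalities $\mu_{D^{\prime}}(E_{1})=\mu_{D^{\prime}}(E_{2})$ for $D^{\prime}\subseteq D$ and $E_{1}\backslash U^{\prime}=E_{2}\backslash U^{\prime}$ via the \Mob\ inversion identity $\pr(Y_{H}=1,Y_{D\backslash H}=0\mid\cdot)=\sum_{H^{\prime}\subseteq D\backslash H}(-1)^{|H^{\prime}|}\mu_{H\cup H^{\prime}}(E)$, and then invoke Lemma~\ref{THM:lemma.beta.theta} with $\theta=\log(\mu)$ row by row. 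The step you flag as the main obstacle is handled in the paper exactly as you propose, so no gap remains.
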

\begin{proof}
See Appendix A.
\end{proof}

We can conclude that it makes sense to focus on submodels characterized by zero LM regression coefficients because they encode interpretable relationships, possibly implying that one or more covariates have no-effect on the distribution of $Y^{D}$, or even on the joint distribution of $Y_{D}$. However, this approach did not identify any missing effects in the application to multimorbidity data. One reason for that is that the zero pattern of regression coefficients described in Proposition~\ref{THM:ind.equal.beta.zero} implies no-effect of $X_{U^{\prime}}$ on $Y_{D}$ and therefore that $X_{U^{\prime}}$ has no-effect on $Y_{v}$ for every single $v\in D$. Consider the case $|U|=1$ of a unique covariate $X_h$ representing the HIV-infection. It is well established that HIV has a relevant effect on each of the comorbidites considered singularly, that is, $\beta^{\langle\mu\rangle}_v(h)\neq 0$ for every $v \in V$. Although in principle it is possible for a covariate $X_{h}$ with a non-zero effect on $Y_{v}$ for some $v\in D$ to have coefficients equal to zero in the regression relative to $Y^{D}$, this hardly happens in practice because of the strong association existing between $Y^{D}$ and every $Y_{v}$ with $v\in D$; recall that $Y_{v}=0$ for any $v\in D$ implies $Y_{D}=0$ and, conversely, $Y^{D}=1$ implies $Y_{v}=1$ for every $v\in D$. In other words, it is well known that, if $X_{h}=1$ for HIV-postive patients, then $\pr(Y_{v}=1|X_{h}=1)>\pr(Y_{v}=1|X_{h}=0)$ for every comorbidity $v\in V$, and therefore it is reasonable to expect that also $\pr(Y^{D}=1|X_{h}=1)>\pr(Y^{D}=1|X_{h}=0)$ for every comorbidity pattern $D\subseteq V$, with $|D|>1$.

To disclose the usefulness of LM regression for the application considered in this paper, it is necessary to propose a different approach. In the multimorbidity analysis, it is useful to distinguish  between two different effects of HIV, specifically, one can be interest in the effect of  HIV  (i) on the prevalence of a comorbidity pattern $D$  and (ii) on the association among  comorbidities in $D$. As discussed above, the former is in general a well known matter  as multimorbidity shows a higher prevalence in infected patients. Indeed, the main question  is whether HIV plays a role in the way  single comorbidities combine together to produce the comorbidity pattern $D$, that is in the association of variables in $Y_D$.
We approach the problem by considering the extreme case of no-effect of HIV on the $D$-response associations because responses are conditionally independent given the covariates. Our standpoint is that if for a subset $D\subseteq V$ there exists a proper partition $A\cup B=D$ such that  $Y_{A}\ci Y_{B}|X_{U}$, then there is no-effect of the covariates on the $|D|$-way association of $Y_{D}$. This allows us to compare the performance of different link functions in multimorbidity analysis and, more importantly, to provide a clear interpretation to the effect of HIV in LM regression.
In the Example~\ref{EXA:mob.reg.002} below we illustrate how this idea can be formalized.
\begin{exmp}[LM regression vs. multivariate logistic regression]\label{EXA:mob.reg.002}
For the case $V=\{b, d\}$ and $U=\{h\}$, assume $Y_{b}\ci Y_{d}|X_{h}$ so that we can say that there is no-effect of $X_{h}=HIV$ on the association of $Y_{b}$ and $Y_{d}$. To see this in practice it is sufficient to consider any  measure of association that represents independence by a constant value, for instance the value zero, so that the association is the same for the two values of $X_{h}$. Exploiting the factorization of the probability function of  $Y_{bd}|X_{h}$ implied by conditional independence, it is not difficult to see that under the LM regression both
\begin{eqnarray*}\label{EQN:EXA:mob.reg.002.001}
Y_{b}\ci Y_{d}|X_{h}
\quad\Longleftrightarrow\quad
\beta^{\langle\mu\rangle}_{bd}(\emptyset)=\beta^{\langle\mu\rangle}_{b}(\emptyset)+\beta^{\langle\mu\rangle}_{d}(\emptyset)
\quad\mbox{and}\quad
\beta^{\langle\mu\rangle}_{bd}(u)=\beta^{\langle\mu\rangle}_{b}(h)+\beta^{\langle\mu\rangle}_{d}(h),
\end{eqnarray*}
and, under the multivariate logistic regression,
\begin{eqnarray*}
Y_{b}\ci Y_{d}|X_{h}\quad\Longleftrightarrow\quad\beta^{\langle\eta\rangle}_{bd}(\emptyset)=0
\quad\mbox{and}\quad
\beta^{\langle\eta\rangle}_{bd}(h)=0.
\end{eqnarray*}
The effect of HIV on the joint distribution of $Y_{bd}$ is represented by $\beta^{\langle\mu\rangle}_{bd}(h)$ in LM regression and by  $\beta^{\langle\eta\rangle}_{bd}(h)$ in multivariate logistic regression and therefore both
\begin{eqnarray}\label{EQN:betamusum}
    \beta^{\langle\mu\rangle}_{bd}(h)=\beta^{\langle\mu\rangle}_{b}(h)+\beta^{\langle\mu\rangle}_{d}(h)
\end{eqnarray}
and
\begin{eqnarray}\label{EQN:betaetazero}
  \beta^{\langle\eta\rangle}_{bd}(h)=0
\end{eqnarray}
can be used to state that there is no-effect of HIV on the association of $Y_{b}$ and $Y_{d}$. However, (\ref{EQN:betamusum}) and (\ref{EQN:betaetazero}) refer to different kinds of association because, although they are  necessary conditions for $Y_{b}\ci Y_{d}|X_{h}$, they are not sufficient for the same condition to hold true, and it can be easily checked that neither  (\ref{EQN:betamusum}) implies (\ref{EQN:betaetazero}) nor (\ref{EQN:betaetazero}) implies (\ref{EQN:betamusum}). From this perspective, it makes little sense to state that \emph{HIV has no-effect on the association of $Y_{b}$ and $Y_{d}$} if a clear interpretation to equalities (\ref{EQN:betamusum}) and (\ref{EQN:betaetazero}) is not provided.

The interpretation of equality (\ref{EQN:betamusum}) is straightforward. It states a connection between regressions of different order and implies that the effect of HIV on the distribution of $Y^{bd}$ is explained by the effects of HIV on the marginal distributions of $Y_{b}$ and $Y_{d}$. More interestingly, it provides an useful insight on the behaviour of relative risks because (\ref{EQN:betamusum}) is equivalent to
\begin{eqnarray}\label{EQN:exa.RR.factorization}
RR_{h}(\{Y_{b}=1\}\cap \{Y_{d}=1\})=RR_{h}(Y_{b}=1)\times RR_{h}(Y_{d}=1)
\end{eqnarray}
that is, if we are willing to interpret the effect of HIV by means of relative risks, then (\ref{EQN:betamusum}) allows one to carry out the analysis marginally on the regression equations for the main responses,
because the computation of the relative risk of the joint event $\{Y_{b}=1\}\cap\{Y_{d}=1\}$ does not require the joint distribution of $Y_{bd}|X_{h}$ but only the marginal distributions of $Y_{b}|X_{h}$ and $Y_{d}|X_{h}$ as in the case where $Y_{b}\ci Y_{d}|X_{h}$.

Multivariate logistic regression parameters are naturally associated with odds ratios, which play a very fundamental role among  association measures for categorical data. However, we deem that, in this context, the interpretation of (\ref{EQN:betamusum}) is more straightforward than  (\ref{EQN:betaetazero}). Indeed, the latter is equivalent to the identity $OR(Y_{bd}|\emptyset)=OR(Y_{bd}|h)$ that, unlike (\ref{EQN:exa.RR.factorization}), does not explains how the marginal effects of HIV on bone fracture and diabetes combine together to give the effect of HIV on the $\{b,d\}$-response association.
\end{exmp}

This example shows that the LM regression coefficients provide a clear way to investigate the effect of a covariate on the association of two binary variables and the following theorem generalises this result to $D$-response associations.
\begin{thm}\label{THM:independence.betamu}
Let $\mu$ be the mean parameter of $Y_{V}|X_{U}$ and $\beta^{\langle\mu\rangle}=\log(\mu)M_{U}$. Then, for a pair of disjoint nonempty subsets $A$ and $B$ of $V$ it holds that  $Y_{A}\ci Y_{B}|X_{U}$ if and only if  for every $D\subseteq A\cup B$ and $E\subseteq U$, with both $D\cap A\neq\emptyset$ and $D\cap B\neq\emptyset$, it holds that $\beta^{\langle\mu\rangle}_{D}(E)= \mathcal{B}^{\langle\mu\rangle}_{D}(E)$ where
\begin{eqnarray}\label{EQN:reference.beta}
  \mathcal{B}^{\langle\mu\rangle}_{D}(E)=-\sum_{D^{\prime}\subset D} (-1)^{|D\backslash D^{\prime}|}\; \beta^{\langle\mu\rangle}_{D^{\prime}}(E).
\end{eqnarray}
\end{thm}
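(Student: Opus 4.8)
The plan is to reduce the stated equality between $\beta^{\langle\mu\rangle}_{D}(E)$ and the reference value $\mathcal{B}^{\langle\mu\rangle}_{D}(E)$ to a vanishing condition on the log-mean linear parameter $\gamma=M^{\T}_{V}\log(\mu)$, and then to invoke the characterisation of independence through $\gamma$, applied conditionally on each covariate pattern. First I would observe that, by the definition (\ref{EQN:reference.beta}), the identity $\beta^{\langle\mu\rangle}_{D}(E)=\mathcal{B}^{\langle\mu\rangle}_{D}(E)$ is equivalent to $\sum_{D^{\prime}\subseteq D}(-1)^{|D\backslash D^{\prime}|}\beta^{\langle\mu\rangle}_{D^{\prime}}(E)=0$, since the $D^{\prime}=D$ term of this alternating sum equals $\beta^{\langle\mu\rangle}_{D}(E)$ while the remaining terms sum to $-\mathcal{B}^{\langle\mu\rangle}_{D}(E)$.

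Next I would introduce $\beta^{\langle\gamma\rangle}=\gamma M_{U}$, i.e.\ the regression coefficients attached to $\theta=\gamma$ in the sense of (\ref{EQN:beta-theta}). Because $\beta^{\langle\mu\rangle}=\log(\mu)M_{U}$ operates on the columns and $M^{\T}_{V}$ on the rows, the two operations commute and $\beta^{\langle\gamma\rangle}=M^{\T}_{V}\log(\mu)M_{U}=M^{\T}_{V}\beta^{\langle\mu\rangle}$, whose $(D,E)$ entry is exactly $\sum_{D^{\prime}\subseteq D}(-1)^{|D\backslash D^{\prime}|}\beta^{\langle\mu\rangle}_{D^{\prime}}(E)$. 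Hence $\beta^{\langle\mu\rangle}_{D}(E)=\mathcal{B}^{\langle\mu\rangle}_{D}(E)$ is equivalent to $\beta^{\langle\gamma\rangle}_{D}(E)=0$. Since for each fixed $D$ the row satisfies $\beta^{\langle\gamma\rangle}_{D}=\gamma_{D}M_{U}$ with $M_{U}$ invertible, requiring $\beta^{\langle\gamma\rangle}_{D}(E)=0$ for all $E\subseteq U$ is the same as requiring $\gamma_{D}(E)=0$ for all $E\subseteq U$. Therefore the right-hand side of the theorem is equivalent to: $\gamma_{D}(E)=0$ for every $E\subseteq U$ and every $D\subseteq A\cup B$ with $D\cap A\neq\emptyset$ and $D\cap B\neq\emptyset$.

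It then remains to match this $\gamma$-condition to $Y_{A}\ci Y_{B}\,|\,X_{U}$. I would use that the covariates are binary and the conditional laws are independent across covariate patterns, so that $Y_{A}\ci Y_{B}\,|\,X_{U}$ holds if and only if $Y_{A}\ci Y_{B}$ in the distribution of $Y_{V}\mid(X_{E}=1,X_{U\backslash E}=0)$ for every $E\subseteq U$. For fixed $E$, the column $\gamma(E)$ is precisely the log-mean linear parameter of that conditional distribution, being built from its mean parameter $\mu(E)$. I would then apply the log-mean linear characterisation of independence \citep{rov-lup-lar:2013}: for a single binary vector with log-mean linear parameter $\gamma$, one has $Y_{A}\ci Y_{B}$ if and only if $\gamma_{D}=0$ for every $D\subseteq A\cup B$ with $D\cap A\neq\emptyset$ and $D\cap B\neq\emptyset$. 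Used level by level in $E$ and combined over all $E\subseteq U$, this yields exactly the $\gamma$-condition of the previous paragraph, closing the equivalence.

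The $M_{U}/M_{V}$ bookkeeping above is routine; the substantive step is the independence characterisation, which I expect to be the main obstacle if it must be re-established rather than cited. Its forward direction is clean: $Y_{A}\ci Y_{B}$ makes $\log\mu$ additively separable on the margin, $\log\mu_{D^{\prime}}=\log\mu_{D^{\prime}\cap A}+\log\mu_{D^{\prime}\cap B}$ for $D^{\prime}\subseteq A\cup B$, and the alternating sum defining $\gamma_{D}$ annihilates any purely $A$-dependent or purely $B$-dependent term as soon as $D$ meets the other block, so $\gamma_{D}=0$ whenever $D$ intersects both $A$ and $B$. The converse requires reconstructing $\log(\mu)=Z^{\T}_{V}\gamma$ from the vanishing interactions, extracting the product form $\mu_{D^{\prime}}=\mu_{D^{\prime}\cap A}\,\mu_{D^{\prime}\cap B}$ on the margin $A\cup B$, and recognising it as $Y_{A}\ci Y_{B}$. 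Here one must verify that only indices $D\subseteq A\cup B$ enter, which holds because $Y_{A}\ci Y_{B}$ constrains only the margin $Y_{A\cup B}$ and each $\gamma_{D}$ with $D\subseteq A\cup B$ is computable within that margin.
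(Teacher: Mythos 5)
Your proposal is correct and follows essentially the same route as the paper: both reduce the condition $\beta^{\langle\mu\rangle}_{D}(E)=\mathcal{B}^{\langle\mu\rangle}_{D}(E)$ to the vanishing of the rows of $M^{\T}_{V}\beta^{\langle\mu\rangle}$ (equivalently of $\gamma=M^{\T}_{V}\log(\mu)$, using the invertibility of $Z_{U}$ or $M_{U}$) and then invoke Theorem~1 of \citet{rov-lup-lar:2013} characterising $Y_{A}\ci Y_{B}\,|\,X_{U}$ by zero log-mean linear interactions. The only difference is presentational: you apply the independence characterisation column by column over the covariate patterns and sketch its proof, whereas the paper cites it directly in its conditional form.
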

\begin{proof}
See Appendix A.
\end{proof}
Theorem~\ref{THM:independence.betamu} shows that, whenever $Y_{D}$ can be split into two conditionally independent subvectors $Y_A$ and $Y_B$, then every coefficient of the LM regression with response $\log(\mu_{D})$ can be written as a linear combination of the corresponding coefficients in lower-order regressions. Hence, the difference between case and control patients with respect to $Y^{D}$ is not given by the effect of HIV on the association of all the comorbidities in $D$, but it is only the consequence of the effect that HIV has on the  occurrence of the subsets $A$ and $B$ of comorbidity patterns.

For the case $U=\{u\}$, the relationship $\beta^{\langle\mu\rangle}_{D}(u)=\mathcal{B}^{\langle\mu\rangle}_{D}(u)$ can be used to state  that $X_{u}$
has no-effect on the $D$-response association and, as well as in Example~\ref{EXA:mob.reg.002}, this statement means that the effect of $X_{u}$ on the distribution of $Y^{D}$ is explained by the corresponding coefficients in lower-order regressions. Furtheremore, it follows immediately from (\ref{EQN.rel.risk.U1}) that the equality $\beta^{\langle\mu\rangle}_{D}(u)=\mathcal{B}^{\langle\mu\rangle}_{D}(u)$ can be equivalently stated in terms of factorization of the relative risk $RR_{u}(Y^{D}=1)$ with respect to the collection of relative risks $RR_{u}(Y^{D^{\prime}}=1)$ for $D^{\prime}\subset D$.
\begin{exmp}[LM regression cont.]\label{EXA:mob.reg.003}
For the LM regression in Example~\ref{EXA:mob.reg.001} consider the case where $Y_{A}\ci Y_{B}|X_{h}$ where $A$ and $B$ are nonempty, disjoint subsets of $V$ such that $A\cup B=V$. Then, it follows from  Theorem~\ref{THM:independence.betamu} both that
\begin{eqnarray*}
\beta^{\langle\mu\rangle}_{bcd}(\emptyset)=-\beta^{\langle\mu\rangle}_{b}(\emptyset)-\beta^{\langle\mu\rangle}_{c}(\emptyset)-\beta^{\langle\mu\rangle}_{d}(\emptyset)+\beta^{\langle\mu\rangle}_{bc}(\emptyset)+\beta^{\langle\mu\rangle}_{bd}(\emptyset)+\beta^{\langle\mu\rangle}_{cd}(\emptyset)
\end{eqnarray*}
and that
\begin{eqnarray*}%\label{EQN:mobius.reg.example.010}
&\beta^{\langle\mu\rangle}_{bcd}(h)=-\beta^{\langle\mu\rangle}_{b}(h)-\beta^{\langle\mu\rangle}_{c}(h)-\beta^{\langle\mu\rangle}_{d}(h)+\beta^{\langle\mu\rangle}_{bc}(h)+\beta^{\langle\mu\rangle}_{bd}(h)+\beta^{\langle\mu\rangle}_{cd}(h)&\\
\nonumber&\Updownarrow&\\
&RR_{h}(Y^{bcd}=1)=
\frac{RR_{h}(Y^{bc}=1)\times RR_{h}(Y^{bd}=1)\times RR_{h}(Y^{cd}=1)}
{RR_{h}(Y^{b}=1)\times RR_{h}(Y^{c}=1)\times RR_{h}(Y^{d}=1)}.&
\end{eqnarray*}
\end{exmp}

More generally, for the case $|U|>1$ it follows from Theorem~\ref{THM:independence.betamu} that the relationship
\begin{eqnarray}\label{EQN:no.effect.definition}
  \beta^{\langle\mu\rangle}_{D}(E)=\mathcal{B}^{\langle\mu\rangle}_{D}(E)\quad\mbox{for every }E\subseteq U\mbox{ such that } u\in E
\end{eqnarray}
implies that every regression coefficient involving $X_{u}$ is the linear combination of the corresponding coefficients of lower-order regressions, and can be used to state that $X_{u}$ has no-effect on the $D$-response association. Also in this case, (\ref{EQN:no.effect.definition}) can be interpreted in terms of relative risks. Consider the collection of relative risks of the event $Y^{D}=1$, with respect to $X_{u}$, conditionally on the values of the remaining covariates $X_{U\backslash \{u\}}$, given by
\begin{eqnarray}\label{EQN:relative.risk.E}
RR_{u}(Y^{D}=1|E)=
\frac{\pr(Y^{D}=1|X_{u}=1,\; X_{E}=1, X_{U\backslash (E\cup \{u\})}=0)}
{\pr(Y^{D}=1|X_{u}=0,\; X_{E}=1, X_{U\backslash (E\cup \{u\})}=0)}\quad\mbox{for }E\subseteq U\backslash\{u\};
\end{eqnarray}
we recall that we set $RR_{u}(Y^{\emptyset}=1|E)=1$. Then we  associate to every element of (\ref{EQN:relative.risk.E}) a reference value defined as follows.
\begin{defn}\label{DEF:reference.relative.risk} For $D\subseteq V$ with $|D|>1$, the \emph{reference relative risk} of the event $Y^{D}=1$ with respect to $X_{u}$ and a subset $E\subseteq U\backslash \{u\}$ is defined as
\begin{eqnarray*}%\label{EQN:reference.RR}
  \mathcal{RR}_{u}(Y^{D}=1|E)=\prod_{D^{\prime}\subset D}\; RR_{u}(Y^{D^{\prime}}=1| E)^{(-1)^{|D\backslash D^{\prime}|+1}}.
\end{eqnarray*}
\end{defn}
The following lemma states the connection between relative risks and regression coefficients as well as between reference relative risks in
Definition~\ref{DEF:reference.relative.risk} and Theorem~\ref{THM:independence.betamu}.
\begin{lem}\label{THM:lemma.beta.b}
Let $\mu$ be the mean parameter of $Y_{V}|X_{U}$ and $\beta^{\langle\mu\rangle}=\log(\mu)M_{U}$. Then, for every $D\subseteq V$, $u\in U$ and $E\subseteq U\backslash \{u\}$ it holds that
\begin{eqnarray*}
\log RR_{u}(Y^{D}=1|E)=\sum_{E^{\prime}\subseteq E} \beta^{\langle\mu\rangle}_{D}(E^{\prime}\cup \{u\})
\end{eqnarray*}
and, for $|D|>1$, that
\begin{eqnarray*}
\log\mathcal{RR}_{u}(Y^{D}=1|E)=\sum_{E^{\prime}\subseteq E} \mathcal{B}^{\langle\mu\rangle}_{D}(E^{\prime}\cup \{u\}).
\end{eqnarray*}
\end{lem}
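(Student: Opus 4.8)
The plan is to reduce both identities to the log-mean regression expansion~(\ref{EQN:mob.regression}), the only genuine content being the identification of the relative risk in~(\ref{EQN:relative.risk.E}) with a ratio of mean parameters. First I would observe that the two probabilities in the definition of $RR_{u}(Y^{D}=1\mid E)$ are themselves entries of $\mu$. Indeed, the conditioning event in the numerator sets exactly the covariates indexed by $E\cup\{u\}$ to level $1$ and all the others to $0$, whereas the denominator sets exactly those indexed by $E$ to $1$ and the rest to $0$ (here $u\notin E$, so $\{u\}\cup(U\backslash(E\cup\{u\}))=U\backslash E$). Recalling that $\mu_{D}(F)=\pr(Y^{D}=1\mid X_{F}=1, X_{U\backslash F}=0)$, this gives $RR_{u}(Y^{D}=1\mid E)=\mu_{D}(E\cup\{u\})/\mu_{D}(E)$, and hence $\log RR_{u}(Y^{D}=1\mid E)=\log\mu_{D}(E\cup\{u\})-\log\mu_{D}(E)$.

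Next I would expand each logarithm by~(\ref{EQN:mob.regression}). The combinatorial remark driving the first identity is that the subsets of $E\cup\{u\}$ split into those avoiding $u$, namely the subsets of $E$, and those containing $u$, namely the sets $E'\cup\{u\}$ with $E'\subseteq E$. Consequently $\log\mu_{D}(E\cup\{u\})=\sum_{E'\subseteq E}\beta^{\langle\mu\rangle}_{D}(E')+\sum_{E'\subseteq E}\beta^{\langle\mu\rangle}_{D}(E'\cup\{u\})$, and subtracting $\log\mu_{D}(E)=\sum_{E'\subseteq E}\beta^{\langle\mu\rangle}_{D}(E')$ leaves exactly $\sum_{E'\subseteq E}\beta^{\langle\mu\rangle}_{D}(E'\cup\{u\})$. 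The degenerate case $D=\emptyset$ is covered by the conventions already in force, since $\beta^{\langle\mu\rangle}_{\emptyset}=0$ and $RR_{u}(Y^{\emptyset}=1\mid E)=1$, so that both sides vanish.

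For the second identity I would take logarithms in Definition~\ref{DEF:reference.relative.risk}, turning the product over $D^{\prime}\subset D$ into $\log\mathcal{RR}_{u}(Y^{D}=1\mid E)=\sum_{D^{\prime}\subset D}(-1)^{|D\backslash D^{\prime}|+1}\log RR_{u}(Y^{D^{\prime}}=1\mid E)$. Substituting the first identity for each summand and interchanging the two finite sums over $D^{\prime}$ and $E^{\prime}$ yields $\sum_{E^{\prime}\subseteq E}\sum_{D^{\prime}\subset D}(-1)^{|D\backslash D^{\prime}|+1}\beta^{\langle\mu\rangle}_{D^{\prime}}(E^{\prime}\cup\{u\})$. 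Comparing the inner sum with~(\ref{EQN:reference.beta}) evaluated at $E^{\prime}\cup\{u\}$, the factor $(-1)^{|D\backslash D^{\prime}|+1}=-(-1)^{|D\backslash D^{\prime}|}$ matches the sign in the definition of $\mathcal{B}^{\langle\mu\rangle}_{D}$, so the inner sum equals $\mathcal{B}^{\langle\mu\rangle}_{D}(E^{\prime}\cup\{u\})$, giving the claimed formula.

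I do not expect a real obstacle here: the argument is essentially mechanical once the first reduction is in place. The only step requiring care is that first one, namely correctly reading off which covariates are held at level $1$ in the numerator and denominator of~(\ref{EQN:relative.risk.E}) so that they coincide with the entries $\mu_{D}(E\cup\{u\})$ and $\mu_{D}(E)$; after that, both statements follow from the \Mob\ expansion~(\ref{EQN:mob.regression}) and a sign comparison with~(\ref{EQN:reference.beta}).
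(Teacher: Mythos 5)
Your proof is correct and follows essentially the same route as the paper's: identify $RR_{u}(Y^{D}=1\mid E)$ with the ratio $\mu_{D}(E\cup\{u\})/\mu_{D}(E)$, cancel the common part of the two \Mob\ expansions to get the first identity, then take logarithms in Definition~\ref{DEF:reference.relative.risk}, substitute, and interchange the finite sums over $D^{\prime}$ and $E^{\prime}$ to match~(\ref{EQN:reference.beta}). The paper runs the second computation in the opposite direction (starting from $\sum_{E^{\prime}}\mathcal{B}^{\langle\mu\rangle}_{D}(E^{\prime}\cup\{u\})$), but the argument is the same.
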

\begin{proof}
See Appendix A.
\end{proof}

The introduction of the reference relative risks is motivated by the following result,
\begin{cor}\label{THM:independence.RR}
Let the subset $D\subseteq V$ be such that there exists a proper partition $D=A\cup B$, with $A, B \neq \emptyset$, satisfying $Y_{A}\ci Y_{B}|X_{U}$. Then for every $u\in U$ it holds that
\begin{eqnarray}\label{EQN:THM.independence.RR.001}
RR_{u}(Y^{D}=1|E)=\mathcal{RR}_{u}(Y^{D}=1|E)\quad\mbox{ for every }  E\subseteq U\backslash \{u\}.
\end{eqnarray}
\end{cor}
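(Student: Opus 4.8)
The plan is to prove the identity by passing to logarithms and then reducing everything to the two immediately preceding results. Since $\pi$ has strictly positive entries, all the relative risks appearing in (\ref{EQN:THM.independence.RR.001}) are positive, so it suffices to establish
\[
\log RR_{u}(Y^{D}=1|E)=\log\mathcal{RR}_{u}(Y^{D}=1|E)\quad\mbox{for every } E\subseteq U\backslash\{u\}.
\]
First I would invoke Lemma~\ref{THM:lemma.beta.b}, which expresses each side of this identity as a sum indexed by the subsets of $E$: the left-hand side equals $\sum_{E^{\prime}\subseteq E}\beta^{\langle\mu\rangle}_{D}(E^{\prime}\cup\{u\})$, while the right-hand side equals $\sum_{E^{\prime}\subseteq E}\mathcal{B}^{\langle\mu\rangle}_{D}(E^{\prime}\cup\{u\})$. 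Here the second formula of the lemma applies because $D=A\cup B$ with $A,B\neq\emptyset$ forces $|D|\geq 2$. Thus the corollary follows the moment the two sums agree, and it will be enough to match them term by term, that is, to prove
\[
\beta^{\langle\mu\rangle}_{D}(F)=\mathcal{B}^{\langle\mu\rangle}_{D}(F)\quad\mbox{for every }F=E^{\prime}\cup\{u\}\mbox{ with }E^{\prime}\subseteq E.
\]

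The key observation is that this term-wise equality is exactly one instance of Theorem~\ref{THM:independence.betamu}. Because $D=A\cup B$, the full set $D$ satisfies $D\cap A=A\neq\emptyset$ and $D\cap B=B\neq\emptyset$, so it lies in the range of the running subset in that theorem. Hence the hypothesis $Y_{A}\ci Y_{B}|X_{U}$ yields, through the theorem applied with its running set equal to our fixed $D$,
\[
\beta^{\langle\mu\rangle}_{D}(F)=\mathcal{B}^{\langle\mu\rangle}_{D}(F)\quad\mbox{for every }F\subseteq U,
\]
in particular for every $F$ of the form $E^{\prime}\cup\{u\}$. Substituting this into the two expansions supplied by Lemma~\ref{THM:lemma.beta.b} makes them identical, so $\log RR_{u}(Y^{D}=1|E)$ and $\log\mathcal{RR}_{u}(Y^{D}=1|E)$ coincide, and exponentiating gives (\ref{EQN:THM.independence.RR.001}).

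I do not anticipate a genuine obstacle: the corollary is essentially a repackaging of Theorem~\ref{THM:independence.betamu} through the dictionary provided by Lemma~\ref{THM:lemma.beta.b}. The only point requiring a moment's care is recognizing that one needs only the single maximal instance of the theorem, namely its running subset taken equal to the full $D$, rather than the entire collection of subsets that the theorem's statement quantifies over; the lower-order sets $D^{\prime}\subset D$ enter solely through the definitions of $\mathcal{B}^{\langle\mu\rangle}_{D}$ and $\mathcal{RR}_{u}(Y^{D}=1|E)$ and are already folded into that single equality once Lemma~\ref{THM:lemma.beta.b} is applied. A secondary technical check is that the strict positivity of $\pi$ guarantees all relative risks are well defined and positive, legitimizing the passage to logarithms at the outset.
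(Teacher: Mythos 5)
Your proposal is correct and follows essentially the same route as the paper's own proof: apply Lemma~\ref{THM:lemma.beta.b} to expand both $\log RR_{u}(Y^{D}=1|E)$ and $\log\mathcal{RR}_{u}(Y^{D}=1|E)$ as sums over $E^{\prime}\subseteq E$, then use Theorem~\ref{THM:independence.betamu} (at the single instance where its running subset equals the full $D=A\cup B$, which meets both $A$ and $B$) to equate the summands $\beta^{\langle\mu\rangle}_{D}(E^{\prime}\cup\{u\})$ and $\mathcal{B}^{\langle\mu\rangle}_{D}(E^{\prime}\cup\{u\})$. Your added remarks on positivity of $\pi$ and on needing only the maximal instance of the theorem are accurate but do not change the argument.
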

\begin{proof}
See Appendix A.
\end{proof}
Hence $\mathcal{RR}_{u}(Y^{D}=1|E)$ is a reference value in the sense that it is the value taken by the corresponding relative risk when $Y_{D}$ can be split into two conditionally independent subvectors. If (\ref{EQN:THM.independence.RR.001}) is satisfied, then all the conditional relative risks of the events $\{Y^{D}=1\}$, with respect to $X_{u}$, are equal to their reference value. This implies that, as far as the relative risk of the multimorbidity pattern $D$ is of concern, the analysis can be carried out marginally on the distributions of $Y_{D^{\prime}}|X_{U}$ with $D^{\prime}\subset D$. The following corollary shows that one can interpret (\ref{EQN:no.effect.definition}) in terms of relative risks because it is equivalent to (\ref{EQN:THM.independence.RR.001}).
\begin{cor}\label{THM:independence.RRfff}
Let $\mu$ be the mean parameter of $Y_{V}|X_{U}$ and $\beta^{\langle\mu\rangle}=\log(\mu)M_{U}$. For any $D\subseteq V$, with $|D|>1$, and $u\in U$ the relationship (\ref{EQN:no.effect.definition}) is satisfied if and only if (\ref{EQN:THM.independence.RR.001}) is satisfied.
\end{cor}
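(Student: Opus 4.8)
The plan is to reduce the stated equivalence to a single Möbius inversion carried out over the subsets of $U\backslash\{u\}$, with Lemma~\ref{THM:lemma.beta.b} doing the work of translating between relative risks and regression coefficients. First I would rewrite condition (\ref{EQN:no.effect.definition}) in a form indexed by $2^{U\backslash\{u\}}$. Since $E$ there ranges over the subsets of $U$ containing $u$, the map $E\mapsto E\backslash\{u\}$ is a bijection onto $2^{U\backslash\{u\}}$, so (\ref{EQN:no.effect.definition}) is equivalent to
\[
\beta^{\langle\mu\rangle}_{D}(E^{\prime}\cup\{u\})=\mathcal{B}^{\langle\mu\rangle}_{D}(E^{\prime}\cup\{u\})\qquad\mbox{for every }E^{\prime}\subseteq U\backslash\{u\}.
\]

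Next I would handle the right-hand condition (\ref{EQN:THM.independence.RR.001}). Because all entries of $\pi$, and hence of $\mu$, are strictly positive, every relative risk $RR_{u}(Y^{D}=1|E)$ and reference relative risk $\mathcal{RR}_{u}(Y^{D}=1|E)$ is a positive quantity, so I may take logarithms freely. Using the two identities of Lemma~\ref{THM:lemma.beta.b} (the second one applies since $|D|>1$ by hypothesis), the logged form of (\ref{EQN:THM.independence.RR.001}) becomes
\[
\sum_{E^{\prime}\subseteq E}\beta^{\langle\mu\rangle}_{D}(E^{\prime}\cup\{u\})=\sum_{E^{\prime}\subseteq E}\mathcal{B}^{\langle\mu\rangle}_{D}(E^{\prime}\cup\{u\})\qquad\mbox{for every }E\subseteq U\backslash\{u\}.
\]

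Finally I would introduce the functions $f(E^{\prime})=\beta^{\langle\mu\rangle}_{D}(E^{\prime}\cup\{u\})$ and $g(E^{\prime})=\mathcal{B}^{\langle\mu\rangle}_{D}(E^{\prime}\cup\{u\})$ defined on the subsets of $U\backslash\{u\}$, and observe that the display just above asserts that the cumulative sums (zeta transforms) of $f$ and $g$ over the Boolean lattice $2^{U\backslash\{u\}}$ coincide. Applying the Möbius inversion equivalence (\ref{EQN.mon.Mobius}) with the ground set $U\backslash\{u\}$ in place of $U$, the zeta transform is a bijection, so equality of the cumulative sums for all $E\subseteq U\backslash\{u\}$ holds if and only if $f(E^{\prime})=g(E^{\prime})$ for all $E^{\prime}\subseteq U\backslash\{u\}$. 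This last condition is exactly the reformulation of (\ref{EQN:no.effect.definition}) obtained in the first step, which establishes the equivalence.

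The only point needing care is the bookkeeping: ensuring that the Möbius inversion is performed over $2^{U\backslash\{u\}}$ rather than $2^{U}$, and that the shift $E^{\prime}\leftrightarrow E^{\prime}\cup\{u\}$ correctly aligns the index sets appearing in (\ref{EQN:no.effect.definition}) with those in Lemma~\ref{THM:lemma.beta.b}. There is no genuine analytic obstacle, since Lemma~\ref{THM:lemma.beta.b} already converts the (logged) relative risks into cumulative sums of coefficients, and Möbius inversion supplies the invertibility that turns the aggregated equalities into pointwise ones.
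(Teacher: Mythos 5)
Your proposal is correct and follows essentially the same route as the paper: both directions are reduced via Lemma~\ref{THM:lemma.beta.b} to the statement that the cumulative sums of $\beta^{\langle\mu\rangle}_{D}(\cdot\cup\{u\})$ and $\mathcal{B}^{\langle\mu\rangle}_{D}(\cdot\cup\{u\})$ over $2^{U\backslash\{u\}}$ agree, and the paper merely proves the pointwise-from-cumulative step by an explicit induction on $|E|$ where you invoke the \Mob\ inversion equivalence~(\ref{EQN.mon.Mobius}) directly --- the same argument in different clothing.
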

\begin{proof}
See Appendix A.
\end{proof}
We can conclude that LM regression provides a useful framework to investigate the effect of covariates on the association of responses. Submodels of interest involve possibly zero regression coefficients, as in Proposition~\ref{THM:ind.equal.beta.zero}, but, more commonly, regression coefficients which are linear combination of lower-order coefficients as in Theorem~\ref{THM:independence.betamu}. This approach can be easily implemented because (i) it involves submodels defined by linear constraints on the regression parameters and (ii) the computation of $\mathcal{B}^{\langle\mu\rangle}_{D}(E)$ in (\ref{EQN:reference.beta}) does not require to specify explicitly the partition of $Y_{D}$ into independent subvectors. A shortcoming of LM regression is that in order to interpret the values of regression coefficients in
terms of conditional independence the coefficients must be contrasted with the
theoretical values (\ref{EQN:reference.beta}) given in Theorem~\ref{THM:independence.betamu}. In the next section we show that LML regression provides a solution to this problem.

\subsection{Log-mean linear regression}
%-------------------------------------------------
The LML regression model is obtained by setting $\theta$ in (\ref{EQN:betadef}) equal to the LML parameter $\gamma=M^{\T}_{V}\log(\mu)$ in (\ref{EQN:definition.of.gamma}), so that, in the saturated case, it follows from (\ref{EQN:beta-theta}) that
$\beta^{\langle\gamma\rangle}=\gamma M_{U}$ and $\gamma=\beta^{\langle\gamma\rangle}Z_{U}$; the latter can be written as
\begin{eqnarray}\label{EQN:LML.regression}
\gamma_{D}(E)=\sum_{E^{\prime}\subseteq E}\beta^{\langle\gamma\rangle}_{D}(E^{\prime})\quad\mbox{ for every }D\subseteq V, E\subseteq U.
\end{eqnarray}
There is a close connection between LM and LML regression given by a linear relationship between $\beta^{\langle\gamma\rangle}$ and  $\beta^{\langle\mu\rangle}$.
\begin{lem}\label{THM:betagamma.betamu}
Let $\mu$ and $\gamma$ be the mean  and LML parameter, respectively, of $Y_{V}|X_{U}$ so that  $\beta^{\langle\mu\rangle}=\log(\mu)M_{U}$ and $\beta^{\langle\gamma\rangle}=\gamma M_{U}$. Then it holds that $\beta^{\langle\gamma\rangle}=M_{V}^{\T}\beta^{\langle\mu\rangle}$, that is for every $D\subseteq V$ and $E\subseteq U$
\begin{eqnarray}\label{EQN.betagamma.betamu}
\beta^{\langle\gamma\rangle}_{D}(E)=\sum_{D^{\prime}\subseteq D} (-1)^{|D\backslash D^{\prime}|}\;\beta_{D^{\prime}}^{\langle\mu\rangle}(E).
\end{eqnarray}
\end{lem}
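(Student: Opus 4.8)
The plan is to reduce the claim to a one-line application of the associativity of matrix multiplication, after which the entrywise identity falls out by reading off the entries of $M_V^{\T}$. By definition~(\ref{EQN:definition.of.gamma}) the LML parameter is $\gamma = M_V^{\T}\log(\mu)$, and by hypothesis $\beta^{\langle\gamma\rangle}=\gamma M_U$. Substituting the former into the latter gives
\[
\beta^{\langle\gamma\rangle}=\gamma M_U=\left(M_V^{\T}\log(\mu)\right)M_U .
\]
Here $M_V^{\T}$ is a $2^{p}\times 2^{p}$ matrix acting on the row index (subsets of $V$) by left multiplication, while $M_U$ is a $2^{q}\times 2^{q}$ matrix acting on the column index (subsets of $U$) by right multiplication. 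These two operations commute purely by associativity of matrix multiplication applied to the intermediate factor $\log(\mu)$, so that
\[
\beta^{\langle\gamma\rangle}=M_V^{\T}\left(\log(\mu)M_U\right)=M_V^{\T}\beta^{\langle\mu\rangle},
\]
the last equality being the defining relation $\beta^{\langle\mu\rangle}=\log(\mu)M_U$. This already establishes the matrix identity $\beta^{\langle\gamma\rangle}=M_V^{\T}\beta^{\langle\mu\rangle}$.

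To obtain the explicit form~(\ref{EQN.betagamma.betamu}), I would simply read off the entries of $M_V^{\T}$ from the definition of $M$ given in Subsection~\ref{subsec:mob}. The entry of $M_V$ indexed by the pair $D^{\prime},D\subseteq V$ equals $(-1)^{|D\backslash D^{\prime}|}1(D^{\prime}\subseteq D)$, hence the $(D,D^{\prime})$ entry of $M_V^{\T}$ is the same expression. Expanding the matrix product $M_V^{\T}\beta^{\langle\mu\rangle}$ row by row then yields, for each fixed $E\subseteq U$,
\[
\beta^{\langle\gamma\rangle}_{D}(E)=\sum_{D^{\prime}\subseteq V}(M_V^{\T})_{D,D^{\prime}}\,\beta^{\langle\mu\rangle}_{D^{\prime}}(E)=\sum_{D^{\prime}\subseteq D}(-1)^{|D\backslash D^{\prime}|}\,\beta^{\langle\mu\rangle}_{D^{\prime}}(E),
\]
where the indicator $1(D^{\prime}\subseteq D)$ restricts the sum to $D^{\prime}\subseteq D$. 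This is exactly~(\ref{EQN.betagamma.betamu}).

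There is no genuine obstacle here: the content is entirely bookkeeping, and no property of $\mu$ beyond strict positivity (needed only so that $\log(\mu)$ is well defined) is used. The single point requiring attention is to keep straight that $M_V$ acts on the response index while $M_U$ acts on the covariate index, so that the two \Mob\ transforms sit on opposite sides of $\log(\mu)$ and therefore commute; once this is observed the result is immediate.
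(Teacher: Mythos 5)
Your proof is correct and follows exactly the paper's argument: substitute $\gamma=M_V^{\T}\log(\mu)$ into $\beta^{\langle\gamma\rangle}=\gamma M_U$ and use associativity to recognize $\log(\mu)M_U=\beta^{\langle\mu\rangle}$. The additional entrywise expansion of $M_V^{\T}$ is a correct (if not strictly necessary) elaboration of what the paper leaves implicit.
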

\begin{proof}
See Appendix A.
\end{proof}
As a consequence of Lemma~\ref{THM:betagamma.betamu}, submodels defined by zero LM regression coefficients as in Proposition~\ref{THM:ind.equal.beta.zero} can be equivalently stated by setting to zero the corresponding LML regression coefficients.
\begin{cor}\label{THM:cor.ind.equal.beta.zero}
Let $Y_{D}$ be the subvector of $Y_{V}$ indexed by $\emptyset\neq D\subseteq V$. For a subset $\emptyset\neq U^{\prime}\subseteq U$, it holds that $Y_{D}\ci X_{U^{\prime}}|X_{U\backslash U^{\prime}}$ if and only if $\beta^{\langle\gamma\rangle}_{D^{\prime}}(E)=0$ for every $D^{\prime}\subseteq D$ and $E\subseteq U$ such that $E\cap U^{\prime}\neq\emptyset$.
\end{cor}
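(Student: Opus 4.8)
The plan is to obtain this corollary as a direct consequence of Proposition~\ref{THM:ind.equal.beta.zero}, which already characterises the conditional independence $Y_{D}\ci X_{U^{\prime}}|X_{U\backslash U^{\prime}}$ in terms of the LM coefficients, combined with the linear change of variables of Lemma~\ref{THM:betagamma.betamu}, which relates the LM and LML coefficients. Since Proposition~\ref{THM:ind.equal.beta.zero} states that $Y_{D}\ci X_{U^{\prime}}|X_{U\backslash U^{\prime}}$ holds if and only if $\beta^{\langle\mu\rangle}_{D^{\prime}}(E)=0$ for every $D^{\prime}\subseteq D$ and every $E\subseteq U$ with $E\cap U^{\prime}\neq\emptyset$, it suffices to prove that this zero pattern on $\beta^{\langle\mu\rangle}$ is equivalent to the same zero pattern on $\beta^{\langle\gamma\rangle}$, namely $\beta^{\langle\gamma\rangle}_{D^{\prime}}(E)=0$ for every such pair $(D^{\prime},E)$. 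Everything then follows by transitivity.

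First I would record the two directions of the change of variables. By Lemma~\ref{THM:betagamma.betamu} we have, for fixed $E$, the relation $\beta^{\langle\gamma\rangle}_{D^{\prime}}(E)=\sum_{D^{\prime\prime}\subseteq D^{\prime}}(-1)^{|D^{\prime}\backslash D^{\prime\prime}|}\beta^{\langle\mu\rangle}_{D^{\prime\prime}}(E)$, and since $\beta^{\langle\gamma\rangle}=M^{\T}_{V}\beta^{\langle\mu\rangle}$ with $M_{V}$ invertible, the \Mob\ inversion in the $V$-index (the identity $\theta=Z^{\T}_{V}\omega\Leftrightarrow\omega=M^{\T}_{V}\theta$ noted in Section~\ref{subsec:mob}) gives the inverse $\beta^{\langle\mu\rangle}_{D^{\prime}}(E)=\sum_{D^{\prime\prime}\subseteq D^{\prime}}\beta^{\langle\gamma\rangle}_{D^{\prime\prime}}(E)$. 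For the forward direction I fix an $E$ with $E\cap U^{\prime}\neq\emptyset$ and a $D^{\prime}\subseteq D$, and observe that each summand $\beta^{\langle\mu\rangle}_{D^{\prime\prime}}(E)$ appearing in the expression for $\beta^{\langle\gamma\rangle}_{D^{\prime}}(E)$ is indexed by $D^{\prime\prime}\subseteq D^{\prime}\subseteq D$ and by the very same $E$; under the hypothesis that all $\beta^{\langle\mu\rangle}$-coefficients with row index a subset of $D$ and column index $E$ vanish, every such summand is zero, hence $\beta^{\langle\gamma\rangle}_{D^{\prime}}(E)=0$. The backward direction is word-for-word identical using the \Mob\ inverse $\beta^{\langle\mu\rangle}_{D^{\prime}}(E)=\sum_{D^{\prime\prime}\subseteq D^{\prime}}\beta^{\langle\gamma\rangle}_{D^{\prime\prime}}(E)$. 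Running this equivalence over all $E$ with $E\cap U^{\prime}\neq\emptyset$ and chaining it with Proposition~\ref{THM:ind.equal.beta.zero} yields the stated biconditional.

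The one step worth isolating — and the only place where care is needed — is the \emph{triangularity} of the transformation with respect to set inclusion in the $V$-index. Because the map of Lemma~\ref{THM:betagamma.betamu} acts columnwise (it leaves the covariate index $E$ untouched) and mixes only rows indexed by subsets of the given row index, its restriction to the rows $\{D^{\prime}:D^{\prime}\subseteq D\}$ is again a unit-triangular, hence invertible, linear map. This is precisely what guarantees that the whole block of zeros indexed by $\{(D^{\prime},E):D^{\prime}\subseteq D,\ E\cap U^{\prime}\neq\emptyset\}$ is transported in both directions: a coefficient with row $D^{\prime}\subseteq D$ never depends on a coefficient with a row outside $D$, so no coefficient with row index \emph{not} contained in $D$ can spoil the vanishing. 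I do not expect any genuine obstacle beyond stating this structural observation cleanly; the remainder is the routine substitution described above.
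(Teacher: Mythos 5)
Your proof is correct and follows essentially the same route as the paper's: it reduces the statement to Proposition~\ref{THM:ind.equal.beta.zero} and then transports the block of zeros between $\beta^{\langle\mu\rangle}$ and $\beta^{\langle\gamma\rangle}$ using Lemma~\ref{THM:betagamma.betamu} in one direction and its \Mob\ inverse $\beta^{\langle\mu\rangle}_{D^{\prime}}(E)=\sum_{D^{\prime\prime}\subseteq D^{\prime}}\beta^{\langle\gamma\rangle}_{D^{\prime\prime}}(E)$ in the other. The triangularity observation you isolate is exactly the (implicit) reason the paper's two-line argument works, so there is nothing to add.
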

\begin{proof}
See Appendix A.
\end{proof}

Furthermore, it follows immediately from (\ref{EQN.betagamma.betamu}) that for $|D|\leq 1$ it holds that $\beta^{\langle\gamma\rangle}_{D}=\beta^{\langle\mu\rangle}_{D}$ whereas for $|D|>1$ LML regression coefficients allow one to immediately check whether a LM regression coefficients coincides with the associated theoretical value given in Theorem~\ref{THM:independence.betamu}.
\begin{prop}\label{THM:gamma.equal.zero}
Let $\mu$ and $\gamma$ be the mean  and LML parameter, respectively, of $Y_{V}|X_{U}$ so that  $\beta^{\langle\mu\rangle}=\log(\mu)M_{U}$ and $\beta^{\langle\gamma\rangle}=\gamma M_{U}$. Then for every $D\subseteq V$, such that $|D|>1$, and $E\subseteq U$ it holds that $\beta^{\langle\gamma\rangle}_{D}(E)=\beta^{\langle\mu\rangle}_{D}(E)-\mathcal{B}^{\langle\mu\rangle}_{D}(E)$ so that
\begin{eqnarray*}
\beta^{\langle\gamma\rangle}_{D}(E)=0\quad\mbox{if and only if}\quad \beta^{\langle\mu\rangle}_{D}(E)=\mathcal{B}^{\langle\mu\rangle}_{D}(E).
\end{eqnarray*}
\end{prop}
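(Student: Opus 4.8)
The plan is to obtain the stated identity as a direct consequence of the linear relationship between $\beta^{\langle\gamma\rangle}$ and $\beta^{\langle\mu\rangle}$ established in Lemma~\ref{THM:betagamma.betamu}, after which the \lq\lq{}if and only if\rq\rq{} claim is immediate. By that lemma, for every $D\subseteq V$ and $E\subseteq U$,
\begin{eqnarray*}
\beta^{\langle\gamma\rangle}_{D}(E)=\sum_{D^{\prime}\subseteq D} (-1)^{|D\backslash D^{\prime}|}\;\beta_{D^{\prime}}^{\langle\mu\rangle}(E),
\end{eqnarray*}
where the summation ranges over all subsets $D^{\prime}\subseteq D$, including $D^{\prime}=D$ itself.

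First I would isolate the top term of this alternating sum. Since $|D\backslash D|=0$, the summand indexed by $D^{\prime}=D$ is $(-1)^{0}\,\beta^{\langle\mu\rangle}_{D}(E)=\beta^{\langle\mu\rangle}_{D}(E)$, so splitting off that term leaves
\begin{eqnarray*}
\beta^{\langle\gamma\rangle}_{D}(E)=\beta^{\langle\mu\rangle}_{D}(E)+\sum_{D^{\prime}\subset D} (-1)^{|D\backslash D^{\prime}|}\;\beta_{D^{\prime}}^{\langle\mu\rangle}(E).
\end{eqnarray*}
Next I would recognise the residual sum over the proper subsets $D^{\prime}\subset D$. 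Comparing with the defining expression~(\ref{EQN:reference.beta}) for $\mathcal{B}^{\langle\mu\rangle}_{D}(E)$ in Theorem~\ref{THM:independence.betamu}, which carries an overall minus sign in front of exactly this sum, the residual term equals $-\mathcal{B}^{\langle\mu\rangle}_{D}(E)$. Substituting then yields $\beta^{\langle\gamma\rangle}_{D}(E)=\beta^{\langle\mu\rangle}_{D}(E)-\mathcal{B}^{\langle\mu\rangle}_{D}(E)$, the claimed identity; the hypothesis $|D|>1$ is used only to ensure that $\mathcal{B}^{\langle\mu\rangle}_{D}(E)$ is defined.

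Finally, the displayed equivalence follows at once: setting $\beta^{\langle\gamma\rangle}_{D}(E)=0$ in the identity and rearranging gives $\beta^{\langle\mu\rangle}_{D}(E)=\mathcal{B}^{\langle\mu\rangle}_{D}(E)$, and conversely. There is essentially no obstacle in this argument, as it amounts to bookkeeping a single \Mob-type alternating sum. The only point demanding care is to match the domain of summation, namely $D^{\prime}\subseteq D$ in Lemma~\ref{THM:betagamma.betamu} against $D^{\prime}\subset D$ in~(\ref{EQN:reference.beta}), together with the sign convention; once the $D^{\prime}=D$ term is separated the two expressions align exactly and no further computation is required.
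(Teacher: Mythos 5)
Your proof is correct and follows exactly the paper's route: the paper dismisses this as "an immediate consequence of (\ref{EQN.betagamma.betamu})", and your argument simply fills in the bookkeeping — splitting off the $D^{\prime}=D$ term from the alternating sum in Lemma~\ref{THM:betagamma.betamu} and matching the remainder against the sign convention in (\ref{EQN:reference.beta}).
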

\begin{proof}
This is an immediate consequence of (\ref{EQN.betagamma.betamu}).
\end{proof}
Hence, the relationship
\begin{eqnarray}\label{EQN:no.effect.gamma}
  \beta^{\langle\gamma\rangle}_{D}(E)=0\quad\mbox{for every }E\subseteq U\mbox{ such that } u\in E
\end{eqnarray}
is equivalent to both  (\ref{EQN:no.effect.definition}) and (\ref{EQN:THM.independence.RR.001}) and we can conclude that LML regression is equivalent to LM regression for the purposes of our analysis, with the advantage that all the submodels of interest can be specified by setting LML regression coefficients to zero. Furthermore, the value of the regression coefficients $\beta^{\langle\gamma\rangle}$ can be used to contrast relative risks with the corresponding reference values as follows.
\begin{cor}\label{THM:gamma.interpretation}
Let $\gamma$ be the  LML parameter of $Y_{V}|X_{U}$ so that $\beta^{\langle\gamma\rangle}=\gamma M_{U}$.  Then for every $D\subseteq V$, such that $|D|>1$, and $E\subseteq U$ it holds that
\begin{eqnarray}\label{EQN:ratio.of.RR}
\log\frac{RR_{u}(Y^{D}=1|E)}{\mathcal{RR}_{u}(Y^{D}=1|E)}=\sum_{E^{\prime}\subseteq E} \beta^{\langle\gamma\rangle}_{D}(E^{\prime}\cup \{u\})
\end{eqnarray}
\end{cor}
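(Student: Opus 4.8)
The plan is to prove the identity by passing to logarithms and then reducing everything to the two results already in hand, namely Lemma~\ref{THM:lemma.beta.b} and Proposition~\ref{THM:gamma.equal.zero}. First I would rewrite the left-hand side as a difference of logarithms,
\begin{eqnarray*}
\log\frac{RR_{u}(Y^{D}=1|E)}{\mathcal{RR}_{u}(Y^{D}=1|E)}=\log RR_{u}(Y^{D}=1|E)-\log\mathcal{RR}_{u}(Y^{D}=1|E),
\end{eqnarray*}
which is legitimate because all entries of $\pi$ are strictly positive, so that every probability, relative risk and reference relative risk is strictly positive. Throughout I take $E\subseteq U\backslash\{u\}$, as required for the relative risks in (\ref{EQN:relative.risk.E}) to be defined, so that $E^{\prime}\cup\{u\}$ is a disjoint union for every $E^{\prime}\subseteq E$.

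Next I would substitute both expansions supplied by Lemma~\ref{THM:lemma.beta.b}. Since $|D|>1$, both identities of that lemma are available, and subtracting the second from the first gives
\begin{eqnarray*}
\log RR_{u}(Y^{D}=1|E)-\log\mathcal{RR}_{u}(Y^{D}=1|E)
=\sum_{E^{\prime}\subseteq E}\left(\beta^{\langle\mu\rangle}_{D}(E^{\prime}\cup\{u\})-\mathcal{B}^{\langle\mu\rangle}_{D}(E^{\prime}\cup\{u\})\right).
\end{eqnarray*}
The two sums range over the same index set $\{E^{\prime}:E^{\prime}\subseteq E\}$, so they may be combined termwise into a single sum of differences.

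The final step is to recognise each summand as an LML regression coefficient. Because $|D|>1$, Proposition~\ref{THM:gamma.equal.zero} applies with $E^{\prime}\cup\{u\}$ playing the role of $E$, yielding $\beta^{\langle\mu\rangle}_{D}(E^{\prime}\cup\{u\})-\mathcal{B}^{\langle\mu\rangle}_{D}(E^{\prime}\cup\{u\})=\beta^{\langle\gamma\rangle}_{D}(E^{\prime}\cup\{u\})$ for each $E^{\prime}\subseteq E$. Substituting this termwise produces exactly the right-hand side $\sum_{E^{\prime}\subseteq E}\beta^{\langle\gamma\rangle}_{D}(E^{\prime}\cup\{u\})$, which closes the argument.

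I do not anticipate a genuine obstacle here: the corollary is a direct composition of the two preceding results, and the only points needing care are bookkeeping ones, namely confirming that the single hypothesis $|D|>1$ simultaneously licenses the reference-relative-risk expansion of Lemma~\ref{THM:lemma.beta.b} and the difference formula of Proposition~\ref{THM:gamma.equal.zero}, and confirming that the index sets of the two sums coincide so that they may be merged. If one preferred not to invoke Proposition~\ref{THM:gamma.equal.zero} as a black box, the same conclusion could be reached by inserting the definitions $\beta^{\langle\mu\rangle}=\log(\mu)M_{U}$ and $\beta^{\langle\gamma\rangle}=\gamma M_{U}$ together with Lemma~\ref{THM:betagamma.betamu}, but that route merely re-derives Proposition~\ref{THM:gamma.equal.zero} along the way and is strictly longer.
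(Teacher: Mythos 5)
Your proof is correct and is essentially the paper's own argument: the paper likewise obtains the identity by summing the relation $\beta^{\langle\gamma\rangle}_{D}=\beta^{\langle\mu\rangle}_{D}-\mathcal{B}^{\langle\mu\rangle}_{D}$ from Proposition~\ref{THM:gamma.equal.zero} over $E^{\prime}\subseteq E$ and then invoking both expansions of Lemma~\ref{THM:lemma.beta.b}; you merely traverse the same two steps in the opposite order. Your remark that $E$ should be read as a subset of $U\backslash\{u\}$ for the conditional relative risks to be defined is a fair bookkeeping observation, but it does not change the substance.
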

\begin{proof}
See Appendix A.
\end{proof}

Note that, interestingly, Corollary~\ref{THM:gamma.interpretation} implies that for the case $|U|=1$ the LML regression coefficients such that $|D|>1$ have an immediate interpretation as deviation of a relative risk from its reference value
\begin{eqnarray}\label{EQN:beta.gamma.single.covariate}
\beta^{\langle\gamma\rangle}_{D}(u)=\log \frac{RR_{u}(Y^{D}=1)}{\mathcal{RR}_{u}(Y^{D}=1)}.
\end{eqnarray}

We close this section by noticing that also the LM and the LML regression models belong to the family of generalized log-linear models. This allows us to exploit the asymptotic results of \citet{lang:1996} for the computation of maximum likelihood estimates (MLEs) and for model comparison. Nevertheless, we remark that the regression framework introduced in this paper is entirely novel.
Both the LM and the LML link functions allow us to specify the functional relationship between $\pi$ and the regression coefficients in closed form. This is a specific property that is not shared by the wider class of generalised log-linear models, and that makes it possible to specify closed-form functional relationships between regression coefficients of different responses. In turn, this is the basis for the factorization of relative risks with respect to the corresponding relative risks of lower order regressions.
\section{Analysis of multimorbidity data}\label{sec:app}
We now apply the LM and the LML regression models to the analysis of the multimorbidity data described in  Section~\ref{SEC:multimorbidity}. We consider four response variables:  $Y_{b}=\;$\emph{Bone fracture}, $Y_{c}=\;$\emph{Cardiovascular disease}, $Y_{d}=\;$\emph{Diabetes} and  $Y_{r}=\;$\emph{Renal failure}. Hence, $V=\{b,c,d,r\}$ and $Y_V$ takes values in $\{0,1\}^{4}$ where, for each variable, the value 1 encodes the presence of the disease. The four responses define 11 multimorbidity patterns denoted by the subsets $D\subseteq V$ with $|D|\geq 2$ and we say that $k=|D|$ is the \emph{size} of the multimorbidity pattern.

For the computation of MLEs we applied the algorithm and the maximization procedure given in \citet{lang:1996} properly adjusted
to account for the inclusion of the LM and LML link functions; for technical details and a review of further maximization approaches
see also \citet{eva-for:2013} and references therein.

It is well established that asymptotic methods are not efficient when the table of observed counts is sparse, that is when many cells have small frequencies \citep[see][Section~10.6]{agresti2013categorical}. For a given sample size, sparsity increases with the number of variables included in the analysis, therefore inference is  less reliable for response associatons of higher-order since they are computed on the relevant marginal tables. In order to keep sparsity at an acceptable level, we restrict the analysis to the effect of two binary covariates indexed by $U=\{a, h\}$; specifially, $X_{h}=\,$\emph{HIV} with the level 1 encoding the presence of the infection and $X_{a}=\,$\emph{Age} with the value 1 for patients aged 45 or more. Firstly, in Subsection \ref{subsec:appH}, we consider the regression model including only the covariate $X_h$, so that each regression coefficient represents the marginal effect of HIV on a $D$-response associations. Next, in Subsection \ref{subsec:appAH} we consider a regression model with two covariates including also the effect of age.

We remark that the case-control design for this study is not based on an outcome-dependent sampling. In fact the enrollment of each patient is independent of the disease status given the HIV status and further individual covariates. Therefore, for this case-study the population relative risk represents an identifiable measure of association, unlike case-control designs where the sample selection depends on the outcome of interest.
\subsection{The single covariate case}\label{subsec:appH}
%-------------------------------------------------
When only the covariate $X_{h}$ is included in the model, the regression coefficients have a straightforward interpretation in terms of relative risks for cases versus controls. Indeed, in the  LM regression model, it holds that for $|D|=1$ the coefficient
$\beta^{\langle \mu \rangle}_D(h)$ is the log-relative risk of the occurrence of single commorbidities and, otherwise, it is the log-relative risk of the occurrence of the multimorbidity pattern $D$; see (\ref{EQN.rel.risk.U1}). In the LML regression, the coefficient $\beta^{\langle \gamma \rangle}_D(h)$ is equal to $\beta^{\langle \mu \rangle}_D(h)$ in case of single responses,
otherwise it is the log-ratio of the relative and the reference relative risks for the occurrence of the multimorbidity pattern $D$,  as shown in (\ref{EQN:beta.gamma.single.covariate}).  Hence, from Proposition \ref{THM:gamma.equal.zero}, the no-effect of HIV defined by  $\beta^{\langle \gamma \rangle}_D(h)=0$ with $|D|> 1$ implies that the relative risk equals the reference relative risk for the multimorbidity pattern $D$. Conversely, a positive or a negative value of $\beta^{\langle \gamma \rangle}_D(h)$ states that the relative risk for the pattern $D$ is higher or lower than its reference  relative risk, respectively.

As a  preliminary analysis, we provide in Table~\ref{tab.full_Hsat} the MLEs under the saturated  LM and LML regression models.
\setlength{\tabcolsep}{3pt}
\begin{table}[p]\small{
\centering
\caption{The saturated LML and LM regression models for  $Y_V|X_h$. The table gives the MLEs of the regression coefficients with their standard error (s.e.) and  $p$-value.}

\hspace*{-4mm}\begin{tabular}{l|rrrrrr|rrrrrr}
\hline
$D$         &  $\hat{\beta}^{\langle \gamma \rangle}_D(\emptyset)$  & s.e. & $p$-value & $\hat{\beta}^{\langle\gamma\rangle}_D(h)$ & s.e. & $p$-value & $\hat{\beta}^{\langle\mu\rangle}_D(\emptyset)$ & s.e. & $p$-value & $\hat{\beta}^{\langle\mu\rangle}_D(h)$ & s.e. & $p$-value  \\
\hline
$\{b\}$    &  -4.573 &   0.106 &   $<\!.001$    & 2.621 &  0.115  & $<\!.001$   & -4.573   & 0.106 & $<\!.001$  & 2.621  &0.115   &$<\!.001$  \\
$\{c\}$    & -4.476  &   0.101 &   $<\!.001$    & 1.056 &  0.143  & $<\!.001$   & -4.476   & 0.101 & $<\!.001$  & 1.056  & 0.143  &$<\!.001$    \\
$\{d\}$    & -3.255  &   0.054 &   $<\!.001$    & 1.061 &  0.075  & $<\!.001$   & -3.255   & 0.054 & $<\!.001$  & 1.061  & 0.075  &$<\!.001$   \\
$\{r\}$    & -6.321  &   0.251 &   $<\!.001$    & 3.570 &  0.261  & $<\!.001$   & -6.321   & 0.251 & $<\!.001$  & 3.570  & 0.261  &$<\!.001$  \\[1mm]
$\{b, c\}$ & 1.158   &   0.524 &   0.027    & -0.737&   0.558 & 0.187   & -7.892   & 0.541 & $<\!.001$  & 2.941  & 0.584  &$<\!.001$   \\
$\{b, d\}$ & 0.422   &   0.415 &   0.309    & -0.539&   0.437 & 0.217   & -7.407   & 0.430 & $<\!.001$  & 3.144  & 0.457  &$<\!.001$   \\
$\{b, r\}$ & 0.230   &   0.272 &   0.398    & -0.257&   0.325 & 0.429   & -10.665  & 0.005 & $<\!.001$  & 5.935  & 0.198  &$<\!.001$   \\
$\{c, d\}$ & 1.776   &   0.181 &   $<\!.001$    & -1.165&   0.268 & $<\!.001$   & -5.955   & 0.211 & $<\!.001$  & 0.953  & 0.309  &0.002   \\
$\{c, r\}$ & 0.133   &   0.270 &   0.622    & 0.404 &  0.393  & 0.304   & -10.665  & 0.005 & $<\!.001$  & 5.030  & 0.310  &$<\!.001$  \\
$\{d, r\}$ & 1.684   &   0.483 &   $<\!.001$    & -0.863&   0.497 & 0.082   & -7.892   & 0.541 & $<\!.001$  & 3.768  & 0.560  &$<\!.001$  \\[1mm]
$\{b, c, d\}$& -0.011  &   0.697 &   0.987    & -0.553&   0.909 & 0.543   & -8.960   & 0.909 & $<\!.001$  & 1.745  & 1.131  &0.123   \\
$\{b, c, r\}$ & 2.493   &  0.581  &  $<\!.001$    & -1.846&   0.676 & 0.006   & -11.358  & 0.005 & $<\!.001$  & 4.812  & 0.487  &$<\!.001$   \\
$\{b, d, r\}$ & 0.456   &  0.634  &  0.472    & -0.101&   0.681 & 0.882   & -11.358  & 0.005 & $<\!.001$  & 5.493  & 0.349  &$<\!.001$   \\
$\{c, d, r\}$ & -0.898  &  0.513  &  0.080    & 0.748 &  0.616  & 0.225   & -11.358  & 0.005 & $<\!.001$  &  4.812 &  0.487 &$<\!.001$   \\[1mm]
$\{b, c, d, r\}$ & -0.867  &  0.846  &  0.305    & 0.742 &  1.015  & 0.465   & -12.051  & 0.005 & $<\!.001$  & 4.143  & 0.952  &$<\!.001$   \\\hline
%$X_h$              & -1.387 & 0.016    & $<\!.001$\\
\end{tabular} \label{tab.full_Hsat}}
\end{table}
To clarify the meaning of the values given in this table consider, for instance, the disease pattern $\{b,c\}$, that is \emph{bone fracture--diabetes}. The estimated LM coefficient $\hat{\beta}^{\langle\mu\rangle}_{bc}(h)=2.941$ provides a strong evidence of the existence of an effect of HIV on the co-occurrence of the two diseases whereas the corresponding LML parameter estimate $\hat{\beta}^{\langle\gamma\rangle}_{bc}(h)=-0.737$ is not significantly different from zero, and this can be interpreted as no-effect of HIV on the association of the two diseases. More specifically, the estimated relative risk of the co-occurrence of bone fracture and diabetes takes value $\exp\{\hat{\beta}^{\langle\mu\rangle}_{bc}(h)\}=\exp(2.941)=18.9$ and is significantly different from 1; however, it is not significantly different from the product of the two marginal relative risks given by $\exp\{\beta^{\langle\mu\rangle}_{b}(h)\}\times\exp\{\beta^{\langle\mu\rangle}_{c}(h)\}$. The latter is the value taken by the relative risk of the co-occurrence of the two diseases when $Y_{b}\ci Y_{c}|X_{h}$ and therefore it implies that the relative risk of the co-occurrence of the two diseases only depends on the marginal relative risks of the two diseases rather than on the way the two diseases associate to give the multimorbidity pattern. Different is the case of the disease pattern $\{c, d\}$  because in this case the estimated LML coefficient $\hat{\beta}^{\langle\gamma\rangle}_{cd}(h)=-1.165$ is significantly different form zero and negative and therefore  the relative risk of the co-occurrence of the two diseases is significantly smaller than expected in the case the two diseases are conditionally independent given HIV. We remark that, when we say that a regression coefficient is significantly different from zero we refer to the statistical test, at level 5\%, based on the asymptotic normal  distribution of the MLEs.

As expected (see  the discussion following Proposition~\ref{THM:ind.equal.beta.zero})
in Table~\ref{tab.full_Hsat} the estimated log-relative risks $\hat{\beta}^{\langle \mu \rangle}_D(h)$ are positive for every single disease and for every multimorbidity pattern. Furthermore, all the corresponding regression coefficients are significantly different from zero, except for  the pattern $\{b,c,d\}$. The lowest relative risk corresponds to the pattern $\{c, d\}$, whereas the highest relative risks are in correspondence of the patterns $\{b, r\}$, $\{c, r\}$ and $\{b, d, r\}$.
More informative is the analysis of the LML regression coefficients. Indeed, most of the  coefficients $\beta^{\langle \gamma \rangle}_D(h)$ are not significantly different from zero, thereby providing an empirical evidence that the log-relative risks for the corresponding disease patterns are positive as a consequence of the effect of HIV on the single diseases they are composed, rather than for an effect of HIV on the associations of diseases. It is interesting that  most of the coefficients $\hat{\beta}^{\langle \gamma \rangle}_D(h)$, for $|D| \geq 2$, take a negative value. This is especially true for the multimorbidity patterns of size two, suggesting that the corresponding relative risks, although positive, take a value that is smaller than the value one would expect in case of conditional independence of the diseases.

The results provided by the saturated model can be summarized considering the average HIV-effect for disease patterns of the same size $k$, under the LM and LML approaches.
In particular, we compute the average of
$\hat{\beta}^{\langle \mu \rangle}_D(h)$ and of $\hat{\beta}^{\langle \gamma \rangle}_D(h)$, for $D$-response associations of the same size; see Appendix~B for further details.
Confidence intervals for these effects are plotted in Figure~\ref{fig:CI}.
\begin{figure}
\begin{center}
\includegraphics[scale=.7]{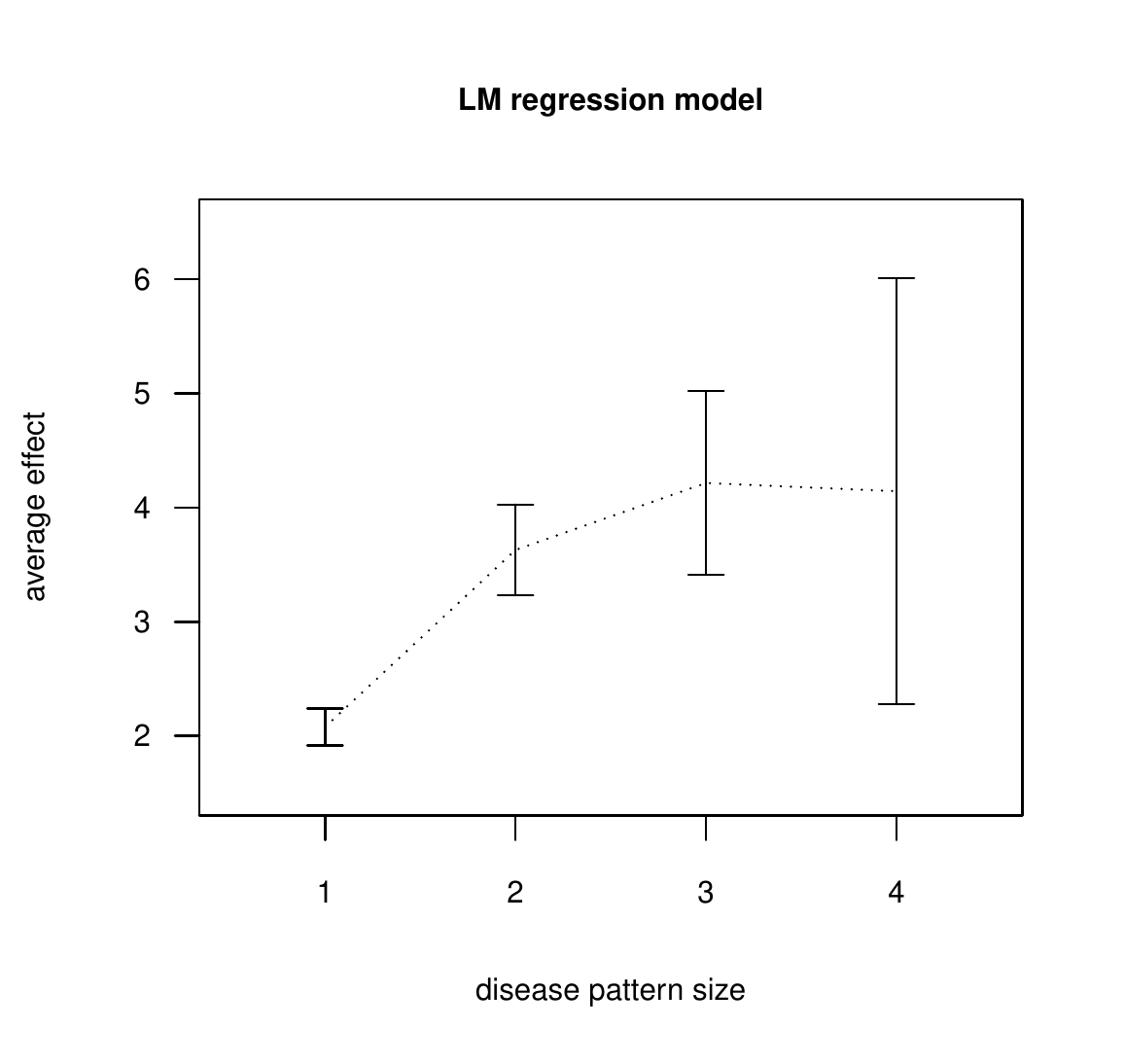}%\label{figsub:CIbemu}
\hspace*{5mm}
\includegraphics[scale=.7]{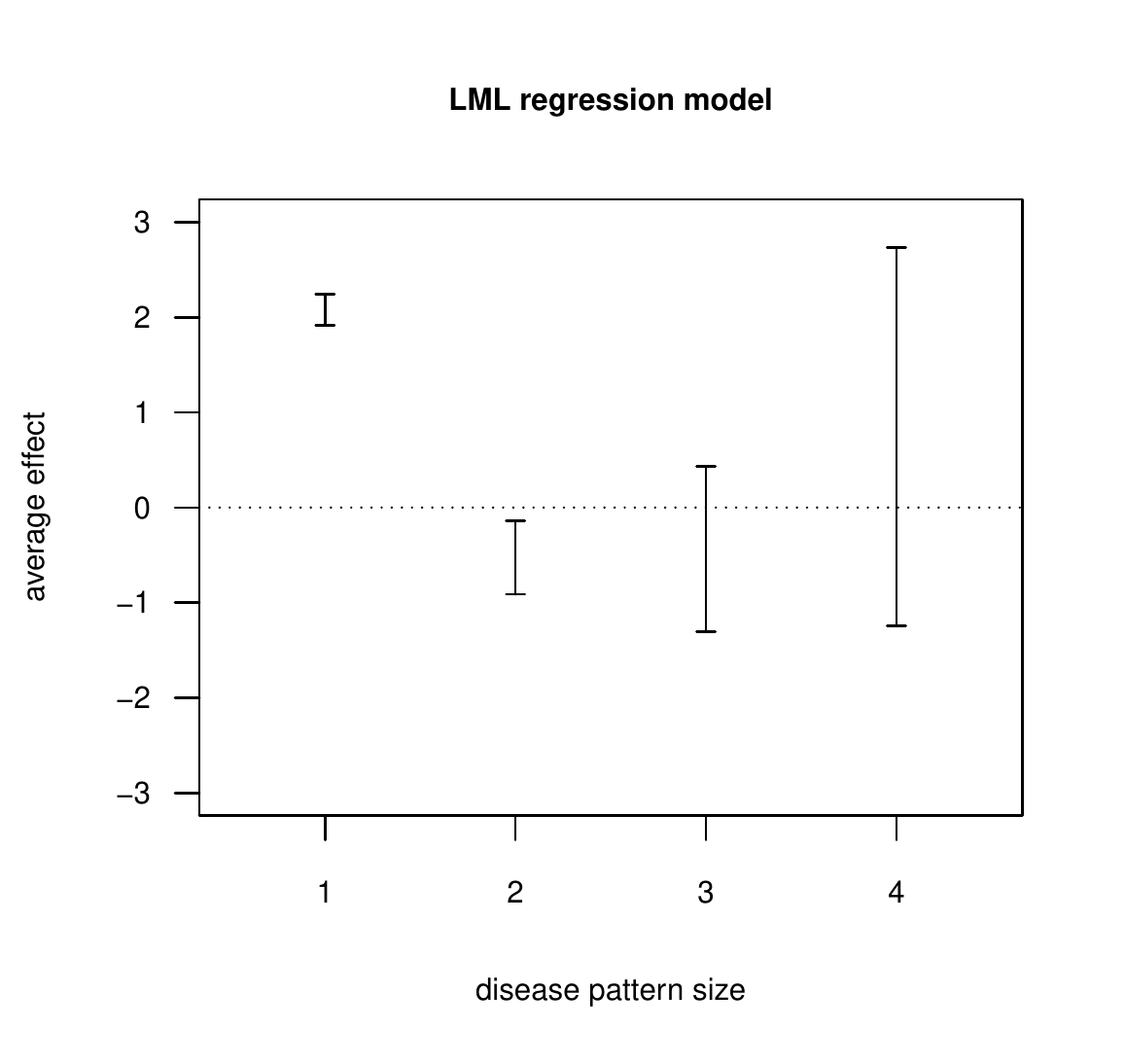}%\label{figsub:CIbega}
\end{center}\caption{Confidence intervals of the average HIV-effect for response associations for disease patterns of the same size under the saturated LM and LML regression models.}\label{fig:CI}
\end{figure}
The plot in the left panel of Figure~\ref{fig:CI} gives the estimated LM average effect, i.e. the average log-relative risk for the occurrence of multimorbidity patterns of the same size, which  clearly increases with the size of the disease pattern.  On the other hand, the plot in the right panel of Figure~\ref{fig:CI} shows  the estimated LML average effect, that is the average effect of HIV on the association among diseases forming patterns of the same size. This effect appears to be negative for patterns of size $k=2$ and it might be null for $k=3$ and $k=4$.

Next, we apply the forward inclusion stepwise procedure described in Appendix B and select, in this way, the LML regression submodel given in Table~\ref{tab.full_H}. The latter  provides a very good fit of the data with a deviance 3.45 on 12 degrees of freedom and $p$-value, computed on the  asymptotic chi-square distribution of the deviance, equal to 0.99. In the selected model there is no effect of HIV on the associations $\{b,c\}$, $\{b,d\}$, $\{b,r\}$,  $\{c,r\}$, $\{d,r\}$, $\{b,c,d\}$ and $\{c,d,r\}$. Furthermore, it follows from Theorem~\ref{THM:independence.betamu} and Proposition~\ref{THM:gamma.equal.zero} that, under the selected model, three pairs of responses are conditionally independent given HIV, specifically, $Y_b \ci Y_d|X_h$, $Y_b \ci Y_r|X_h$ and $Y_c \ci Y_r|X_h$.

\begin{table}
\centering\caption{The selected LML, and the corresponding LM, regression model for  $Y_V|X_h$. The table gives the MLEs of the regression coefficients with their standard error (s.e.) and $p$-value.}
\small{
\begin{tabular}{l|rrrrrr|rrrr}
\hline
$D$        &  $\hat{\beta}^{\langle \gamma \rangle}_D(\emptyset)$  & s.e. & $p$-value & $\hat{\beta}^{\langle\gamma\rangle}_D(h)$ & s.e. & $p$-value & $\hat{\beta}^{\langle\mu\rangle}_D(\emptyset)$ & s.e. & $\hat{\beta}^{\langle\mu\rangle}_D(h)$ & s.e. \\
\hline
$\{b\}$    & -4.534  & 0.088    & $<\!.001$    & 2.579     &  0.100     & $<\!.001$   & -4.534  & 0.088  & 2.579& 0.100   \\
$\{c\}$    & -4.474  & 0.093    & $<\!.001$    & 1.056     & 0.138      & $<\!.001$   & -4.474  & 0.093  & 1.056& 0.138    \\
$\{d\}$    & -3.253  & 0.053    & $<\!.001$    & 1.058     & 0.075      & $<\!.001$   & -3.253  & 0.053  & 1.058& 0.075   \\
$\{r\}$    & -6.131  & 0.089    & $<\!.001$    & 3.383     & 0.114      & $<\!.001$   & -6.131  & 0.089  & 3.383& 0.114   \\[1mm]
$\{b, c\}$ & 0.443   & 0.181    & 0.014    & $\cdot$   & $\cdot$    & $\cdot$ & -8.565  & 0.227  & 3.635& 0.178   \\
$\{b, d\}$ & $\cdot$ & $\cdot$  & $\cdot$  & $\cdot$   & $\cdot$    & $\cdot$ & -7.787  & 0.101  & 3.637& 0.122   \\
$\{b, r\}$ & $\cdot$ & $\cdot$  & $\cdot$  & $\cdot$   & $\cdot$    & $\cdot$ & -10.665 & 0.011  & 5.962& 0.086   \\
$\{c, d\}$ & 1.777   & 0.124    & $<\!.001$    & -1.232    & 0.235      & $<\!.001$   & -5.950  & 0.182  & 0.882& 0.291   \\
$\{c, r\}$ & $\cdot$ & $\cdot$  & $\cdot$  & $\cdot$   & $\cdot$    & $\cdot$ & -10.605 & 0.118  & 4.439& 0.172   \\
$\{d, r\}$ & 0.840   & 0.102    & $<\!.001$    &$\cdot$    & $\cdot$    & $\cdot$ & -8.544  & 0.155  & 4.440& 0.142   \\[1mm]
$\{b, c, d\}$& $\cdot$ & $\cdot$  & $\cdot$  & $\cdot$   & $\cdot$    & $\cdot$ & -10.041 & 0.261  & 3.461& 0.299   \\
$\{b, c, r\}$ & 3.337  & 0.202    & $<\!.001$    & -2.631    & 0.431      & $<\!.001$   & -11.358 & 0.006  & 4.387& 0.439   \\
$\{b, d, r\}$ & 1.720  & 0.112    & $<\!.001$    & -1.382    & 0.238      & $<\!.001$   & -11.358 & 0.011  & 5.638& 0.251   \\
$\{c, d, r\}$ & $\cdot$& $\cdot$  & $\cdot$  & $\cdot$   & $\cdot$    & $\cdot$ & -11.241 & 0.222  & 4.264& 0.323   \\[1mm]
$\{b, c, d, r\}$& -1.777 & 0.124    & $<\!.001$    & 1.284     & 0.530      & 0.015   & -12.051 & 0.006  & 4.115& 0.729   \\\hline
%$X_h$              & -1.387 & 0.016    & $<\!.001$\\
\end{tabular} \label{tab.full_H}}
\end{table}
We now look more closely at the values taken by the estimated regression coefficients given in Table~\ref{tab.full_H}. We base this analysis on the asymptotic normal distribution of the MLEs,  but it is important to remark that, as we are dealing with post-selection parameter estimates, there might be distortions on the sampling distributions; see \citet{berk-al:2013} for a discussion and recent developments. The highest estimated relative risks are for multimorbidity patterns $\{b, r\}$ and $\{b, d, r\}$ with 95\% confidence intervals $(5.793; 6.130)$ and $(5.146; 6.130)$, respectively. More generally, high relative risks are given in correspondence of multimorbidity patterns including \emph{Renal failure}.

The negative values taken by the estimates $\hat{\beta}^{\langle \gamma \rangle}_D(h)$ for the remaining patterns of size two and three suggest that the relative risks of the multimorbidity patterns $\{c, d\}$, $\{b, d, r\}$ and $\{b, c, r\}$ are lower than their reference relative risks. For the disease pattern of size 4 the estimate of $\beta^{\langle \gamma \rangle}_D(h)$ is positive suggesting that the  relative risk of the pattern $\{b, c, d, r\}$ is  higher than its reference relative risk.
\subsection{The two-covariate case}\label{subsec:appAH}
%%-------------------------------------------------------
We now introduce in the analysis of the previous section the additional covariate $X_{a}=Age$ and apply the model selection procedure
described in Appendix~B to obtain the model given in Table~\ref{tab.full_AH}. Such model has deviance $32.996$ on 33 degrees of freedom ($p=0.467$).
\begin{table}[h!]
\begin{center}\caption{The selected LML, and the corresponding LM, regression model for  $Y_V|X_{ah}$. The table gives the MLEs of the  regression coefficients with their standard error (s.e.) and $p$-value.}
\small{
\begin{tabular}{l|rrrrrrrrr|rr}
\hline
$D$         &  $\hat{\beta}^{\langle \gamma \rangle}_D(\emptyset)$  & s.e. & $p$-value & $\hat{\beta}^{\langle \gamma \rangle}_D(a)$ & s.e. & $p$-value& $\hat{\beta}^{\langle \gamma \rangle}_D(h)$ & s.e. & $p$-value & $\hat{\beta}^{\langle \mu \rangle}_D(h)$ & s.e. \\
\hline
$\{b\}$     & -4.717  & 0.117   & $<\!.001$    & 0.274  & 0.087  & 0.002  & 2.604  &  0.113 & $<\!.001$ &     2.604  &   0.113      \\
$\{c\}$     &  -5.347 &  0.170  &$<\!.001$     & 1.268  &  0.175 &$<\!.001$   &  1.044 & 0.140  & $<\!.001$ &     1.044  &   0.140    \\
$\{d\}$     &  -4.052 &  0.093  &$<\!.001$     & 1.159  &  0.095 & $<\!.001$  &  1.059 &0.074   &$<\!.001$  &     1.059  &   0.074\\
$\{r\}$   & -6.745  & 0.223   & $<\!.001$    & 0.869  & 0.153  & $<\!.001$  & 3.419  & 0.219  &$<\!.001$  &     3.419  &   0.219 \\[1mm]
$\{b, c\}$  & 1.292   & 0.273   & $<\!.001$    & $\cdot$ & $\cdot$& $\cdot$& -0.907 & 0.329  &0.006  &    2.742  &   0.363  \\
$\{b, d\}$  & $\cdot$ & $\cdot$ & $\cdot$  & $\cdot$ & $\cdot$& $\cdot$& $\cdot$& $\cdot$& $\cdot$&   3.663  &   0.135  \\
$\{b, r\}$  & 2.732   & 0.287   & $<\!.001$    & -0.720   &  0.322 & 0.025  & -2.223 &  0.361 & $<\!.001$ &   3.799  &   0.375   \\
$\{c, d\}$  &  2.421  & 0.308   &$<\!.001$     & -0.980   &  0.312 & 0.002  & -1.119 & 0.257  &$<\!.001$  &   0.985  &   0.297    \\
$\{c, r\}$  & 2.252   & 0.262   &$<\!.001$    & $\cdot$ & $\cdot$  & $\cdot$ & -1.881 &  0.379 & $<\!.001$ &  2.582  &   0.348   \\
$\{d, r\}$  & 2.249   & 0.294   &$<\!.001$     & -0.628   & 0.279  &0.024   & -1.053 &  0.320 &0.001  &   3.425  &   0.348   \\ [1mm]
$\{b, c, d\}$ & $\cdot$ & $\cdot$ & $\cdot$  & $\cdot$ & $\cdot$& $\cdot$& $\cdot$& $\cdot$& $\cdot$&   2.682  &   0.442\\
$\{b, c, r\}$ & $\cdot$ & $\cdot$ & $\cdot$  & $\cdot$ & $\cdot$& $\cdot$& $\cdot$& $\cdot$& $\cdot$ &  2.056  &   0.474   \\
$\{b, d, r\}$ & $\cdot$ & $\cdot$ & $\cdot$  & $\cdot$ & $\cdot$& $\cdot$& $\cdot$& $\cdot$& $\cdot$&   3.805  &   0.389     \\
$\{c, d, r\}$ & -1.312  & 0.376   & $<\!.001$    & $\cdot$ & $\cdot$& $\cdot$& 1.299  &0.485   &0.007  &    2.769  &   0.579     \\ [1mm]
$\{b, c, d, r\}$& $\cdot$ & $\cdot$ & $\cdot$  & $\cdot$ & $\cdot$& $\cdot$& $\cdot$& $\cdot$& $\cdot$&   2.242  &   0.661     \\\hline
\end{tabular} \label{tab.full_AH}}
\end{center}
\end{table}

When the model includes more than one covariate, the regression coefficients can be used to compute conditional relative risks as shown in Lemma~\ref{THM:lemma.beta.b} and Corollary~\ref{THM:gamma.interpretation}. However, in the model resulting from the application of the selection procedure  the interactions  $\beta_D^{\langle \gamma \rangle}(ah)$  are equal to zero for every $D\subseteq V$ and, by  (\ref{EQN.mon.Mobius}) and
(\ref{EQN.betagamma.betamu}), this implies that also $\beta_D^{\langle \mu \rangle}(ah)$ are equal to zero for every $D\subseteq V$. As a consequence, in the selected model the regression coefficients have a direct interpretation in term of conditional relative risks, for every $D\subseteq V$, as follows
\begin{eqnarray*}
\beta^{\langle\mu\rangle}_{D}(h)=\log RR_h(Y^D=1|\emptyset)=\log RR_h(Y^D=1|a)
\end{eqnarray*}
and similarly for $\beta^{\langle\mu\rangle}_{D}(a)$. LML regression coefficients coincide with LM regression coefficients for $|D|\leq 1$ whereas, for $|D|>1$, if follows from (\ref{THM:gamma.interpretation}) that
\begin{eqnarray*}
\beta^{\langle\gamma\rangle}_{D}(h)=\log \frac{RR_h(Y^D=1|\emptyset)}{\mathcal{RR}_h(Y^D=1|\emptyset)}=\log \frac{RR_h(Y^D=1|a)}{\mathcal{RR}_h(Y^D=1|a)}
\end{eqnarray*}
and similarly for $\beta^{\langle\gamma\rangle}_{D}(a)$.

Table~\ref{tab.full_AH} includes, for every $D \subseteq V$, the estimates $\hat{\beta}^{\langle \mu \rangle}_D(h)$ of the  corresponding LM regression model. As well as in the model with one covariate, also in this case  the fitted model provides high  relative risks of patterns involving \emph{Renal failure}. Nevertheless, compared with the results illustrated in Section \ref{subsec:appH}, the inclusion of \emph{Age} leads to a sensitive reduction of the estimated relative risk $\hat{\beta}^{\langle \mu \rangle}_D(h)$ for most of the multimorbidity patterns.

The fitted model shows a negative value of the estimates $\hat{\beta}^{\langle \gamma \rangle}_D(h)$ for most of the patterns of size two,  except for the pattern $D=\{b,d\}$ where the model constraints imply the conditional independence relationship $Y_b \ci Y_d|X_{ah}$. According to the selected model, HIV has no-effect on every $D$-response association of size $|D|>2$ with the exception of the pattern  $\{c,d,r\}$.

The estimates of the LML regression coefficients $\hat{\beta}^{\langle \gamma \rangle}_D(a)$
show a positive value in univariate regressions
as they coincide with the estimates of the log-relative risk of a single comorbidity for the increasing of age; in particular, the highest estimates  are for the events $\{Y_c=1\}$ and $\{Y_d=1\}$, whereas the lowest is for $\{Y_b=1\}$. For most of the multimorbidity patterns, the dataset supports the hypotheses of no-effect of age with $\hat{\beta}^{\langle \gamma \rangle}_D(a)=0$, except for patterns $\{b,r\}$, $\{c,d\}$ and $\{d,r\}$ where the estimates of the corresponding coefficients are negative.
\section{Discussion}\label{sec:discuss}
%-------------------------------------------------
A wide range of alternative link functions for binary data have been proposed in the literature and, in particular, the marginal log-linear approach of \citet{ber-rud:2002} provides a wide and flexible class of links including the multivariate logistic one. However, in marginal regression modeling, iterative procedures need to be typically used to compute the cell probabilities from the parameters of the model. A key property of both the LM and the LML link functions is that probabilities can be analytically computed from the parameters. This inverse closed-form mapping is a distinguishing feature that confers our approach fundamental advantages that are not generally shared by other methods. In particular, this makes it possible to derive a closed-form functional relationships between the coefficients of regressions of different order that turns out to be very convenient when the interest is on the association of responses as in the multimorbidity application or, more generally, as stated by the line of enquiry (iii) given in \citet[][Sec.~6.5]{mcc-nel:1989}; see also Section~\ref{sec.intro}. More concretely, suitable collections of zero LML coefficients imply
the analysis in terms of relative risks can be equivalently performed in marginal distributions of the responses, even when independence relationships between responses do not hold true. Similar conclusions cannot be drawn by looking, for instance, at the coefficients of regressions based on the multivariate logistic link where analysis in terms of covariate effect on response associations can be equivalently carried out in marginal distributions only in case of independence.

With respect to the lines of enquiry given in Section~\ref{sec.intro}, we remark that
although the main focus of this paper is on the line of enquiry (iii),  the LM and the LML regression models are useful also when the interest is for the lines of enquiry (i) and (ii).
Firstly, with respect to (i), that is the analysis of the dependence structure of each response marginally on covariates, a central role is played by multivariate logistic regression because it maintains a marginal logistic regression interpretation for the single outcomes.
However, when the interest is for relative risks, rather than odds ratios,
the LM and LML regression models provide a useful alternative.
Secondly, when (ii) is of concern, that is a model for the joint distribution of all responses, Proposition~\ref{THM:ind.equal.beta.zero} and Corollary~\ref{THM:cor.ind.equal.beta.zero} show that the LM and LML regression models allow one to identify independencies among subsets of responses, conditionally on covariates. Interestingly, this feature is shared by the multivariate logistic regression \citep{mar-lup:2011} and, more precisely, the family of regression graph models \citep[see][]{wermuth2012sequencies} for binary data turns out as a special case of our approach. However, the regression graph representation of the model does not give the associations of responses and therefore we do not pursue this aspect here.

Future research directions for the class of LM and LML regression models involve the extension to response variables with an arbitrary number of levels, following a similar generalization of the LM and the LML parameterizations provided by \citet{roverato2015log}. It is also of interest the inclusion of continuous covariates and of additive random effects which may be
useful in the case where the sampling design induces unobserved correlation between
units which cannot be totally explained by the covariates. Unlike other approaches to regressions for categorical responses, such as the GEE models, the LM and LML regressions do not seem to lend themselves to semi-parametric fitting approaches because correlations between responses are regarded as parameters of interest rather than as nuisance parameters.

\section*{Acknowledgments}
%------------------------------------------------------------
We gratefully acknowledge Mauro Gasparini, Luca La Rocca and Nanny Wermuth for helpful discussions. We thank two Referees for their helpful comments.

\appendix

\section{Proofs}\label{sec.suppmat-proof}
\subsection*{Proof of Lemma~\ref{THM:lemma.beta.theta} }
%-------------------------------------------------
(i)$\;\Rightarrow\;$(ii). Since $\theta_{D}(E)=\sum_{E^{\prime}\subseteq E} \beta^{\langle\theta\rangle}_{D}(E^{\prime})$, then if $E_{1},E_{2}\subseteq U$ it follows by (i) both that
\begin{eqnarray*}
\theta_{D}(E_{1})=\sum_{E^{\prime}\subseteq E_{1}} \beta^{\langle\theta\rangle}_{D}(E^{\prime})
         =\sum_{E^{\prime}\subseteq E_{1}\backslash U^{\prime}} \beta^{\langle\theta\rangle}_{D}(E^{\prime})
\end{eqnarray*}
and that
\begin{eqnarray*}
\theta_{D}(E_{2})=\sum_{E^{\prime}\subseteq E_{2}} \beta^{\langle\theta\rangle}_{D}(E^{\prime})
         =\sum_{E^{\prime}\subseteq E_{2}\backslash U^{\prime}} \beta^{\langle\theta\rangle}_{D}(E^{\prime}).
\end{eqnarray*}
Hence, $\theta_{D}(E_{1})=\theta_{D}(E_{2})$ whenever $E_{1}\backslash U^{\prime}=E_{2}\backslash U^{\prime}$, as required.

We now show that (ii)$\;\Rightarrow\;$(i). If $E\subseteq U$ is such that $E\cap U^{\prime}\neq\emptyset$ then we can find an element  $u\in E\cap U^{\prime}$ and, furthermore, it is straightforward to see that for every $E^{\prime}\subseteq E\backslash \{u\}$ it holds that $E^{\prime}\backslash U^{\prime}=(E^{\prime}\cup \{u\})\backslash U^{\prime}$ and this implies, by (ii), that  $\theta_{D}(E^{\prime})=\theta_{D}(E^{\prime}\cup \{u\})$. The result follows because
\begin{eqnarray*}
\beta^{\langle\theta\rangle}_{D}(E)
 &=&\sum_{E^{\prime}\subseteq E} (-1)^{|E\backslash E^{\prime}|} \theta_{D}(E^{\prime})\\
 &=&\sum_{E^{\prime}\subseteq E\backslash \{v\}} (-1)^{|E\backslash E^{\prime}|} \left\{\theta_{D}(E^{\prime})-\theta_{D}(E^{\prime}\cup \{v\})\right\}\\
 &=& 0.
\end{eqnarray*}

\subsection*{Proof of Proposition~\ref{THM:ind.equal.beta.zero} }
%-------------------------------------------------
Recall that, since by construction $\beta^{\langle\mu\rangle}_{\emptyset}(E)=0$ for every $E\subseteq U$ it is sufficient to consider the case where $D^{\prime}\neq\emptyset$.
We first show that $\beta^{\langle\mu\rangle}_{D^{\prime}}(E)=0$ for every $\emptyset\neq D^{\prime}\subseteq D$ and $E\subseteq U$ such that $E\cap U^{\prime}\neq\emptyset$ implies that the conditional distribution
\begin{eqnarray}\label{EQN:proof.thm.ind.equal.beta.zero.002}
\pr(Y_{H}=1, Y_{D\backslash H}=0\mid X_{E}=1, X_{U\backslash E}=0)
\end{eqnarray}
does not depend on the value taken by $X_{U^{\prime}}$ for every $H\subseteq D$ and $E\subseteq U$. The result follows by noticing that for every $H\subseteq D$ and $E\subseteq U$ it holds that
\begin{eqnarray}\label{EQN:proof.thm.ind.equal.beta.zero.001}
  \pr(Y_{H}=1, Y_{D\backslash H}=0\mid X_{E}=1, X_{U\backslash E}=0)=
  \sum_{H^{\prime}\subseteq D\backslash H} (-1)^{|H^{\prime}|}\; \mu_{H\cup H^{\prime}}(E)
\end{eqnarray}
and that every term $\mu_{H\cup H^{\prime}}(E)$ in (\ref{EQN:proof.thm.ind.equal.beta.zero.001}) does not depend on the value taken by $X_{U^{\prime}}$. This follows from Lemma~\ref{THM:lemma.beta.theta} which states that $\beta^{\langle\mu\rangle}_{D^{\prime}}(E)=0$, for every $\emptyset\neq D^{\prime}\subseteq D$ and $E\subseteq U$ with $E\cap U^{\prime}\neq\emptyset$, implies
$\mu_{D^{\prime}}(E_{1})=\mu_{D^{\prime}}(E_{2})$, for every $E_{1},E_{2}\subseteq U$ with $E_{1}\backslash U^{\prime}=E_{2}\backslash U^{\prime}$. In turns this means that every term $\mu_{H\cup H^{\prime}}(E)$ in (\ref{EQN:proof.thm.ind.equal.beta.zero.001}) with $H\cup H^{\prime}\neq\emptyset$ does not depend on the value taken by $X_{U^{\prime}}$.  Furthermore, if $H\cup H^{\prime}=\emptyset$, then $\mu_{H\cup H^{\prime}}(E)=1$ by definition and this completes the first part of the proof.

We now prove the reverse implication, that is we assume that the conditional distribution of $Y^{D}|X_{U}$ in (\ref{EQN:proof.thm.ind.equal.beta.zero.002}) does not depend on $X_{U^{\prime}}$ and show that this implies
$\beta^{\langle\mu\rangle}_{D^{\prime}}(E)=0$ for every $\emptyset\neq D^{\prime}\subseteq D$ and  $E\subseteq U$ such that $E\cap U^{\prime}\neq\emptyset$. If $\emptyset\neq D^{\prime}\subseteq D$ then, by assumption, the conditional distribution of $Y^{D^{\prime}}|X_{U}$  does not depend on $X_{U^{\prime}}$ and, in turn, this  implies that $\mu_{D^{\prime}}(E_{1})=\mu_{D^{\prime}}(E_{2})$ for every $E_{1},E_{2}\subseteq U$ such that $E_{1}\backslash U^{\prime}=E_{2}\backslash U^{\prime}$ because, in this case, $\mu_{D^{\prime}}(E_{1})$ and $\mu_{D^{\prime}}(E_{2})$ are the probabilities of $Y_{D^{\prime}}=1$ conditioned on values of $X_{U}$ that may only differ for variables in $X_{U^{\prime}}$. Hence we can apply Lemma~\ref{THM:lemma.beta.theta} to obtain that $\beta^{\langle\mu\rangle}_{D^{\prime}}(E)=0$ for every  $E\subseteq U$ such that $E\cap U^{\prime}\neq\emptyset$.

\subsection*{Proof of Theorem~\ref{THM:independence.betamu}}
%-------------------------------------------------
Theorem~1 of \citet{rov-lup-lar:2013} implies that $Y_{A}\ci Y_{B}|X_{U}$ if and only if every row of $M^{\T}\log(\mu)$ indexed by $D\subseteq A\cap B$ with $D\cap A\neq\emptyset$ and $D\cap B\neq\emptyset$ is equal to zero. Since $\log(\mu)=\beta^{\langle\mu\rangle}Z_{U}$ then the row of $M^{\T}\log(\mu)$ indexed by $D$ is equal to zero if and only if the corresponding row of
$M^{\T}\beta^{\langle\mu\rangle}Z_{U}$ is equal to zero. In turn, a row of $M^{\T}\beta^{\langle\mu\rangle}Z_{U}$ is equal to zero if and only if the corresponding row of $M^{\T}\beta^{\langle\mu\rangle}$ is equal to zero, because $Z_{U}$ has full rank. The row of $M^{\T}\beta^{\langle\mu\rangle}$ indexed by $D$ has entries
$\sum_{D^{\prime}\subseteq D} (-1)^{|D\backslash D^{\prime}|}\; \beta^{\langle\mu\rangle}_{D^{\prime}}(E)=\beta^{\langle\mu\rangle}_{D}(E)- \mathcal{B}^{\langle\mu\rangle}_{D}(E)$ for  $E\subseteq U$ and, therefore, it is equal to zero
if and only if   $\beta^{\langle\mu\rangle}_{D}(E)= \mathcal{B}^{\langle\mu\rangle}_{D}(E)$ for every $E\subseteq U$, as required.

\subsection*{Proof of Lemma~\ref{THM:lemma.beta.b} }
%-------------------------------------------------
The first equality follows immediately from the fact that the relative risk is the ratio of two mean parameters as follows
\begin{eqnarray*}
\log RR_{u}(Y^{D}=1|E)=\log\left\{\frac{\mu_{D}(E\cup \{u\})}{\mu_{D}(E)}\right\}
=\sum_{E^{\prime}\subseteq E}\beta^{\langle\mu\rangle}_{D}(E^{\prime}\cup \{u\}).
\end{eqnarray*}
The second equality can be proved as follows,
\begin{eqnarray*}
\sum_{E^{\prime}\subseteq E}\mathcal{B}^{\langle\mu\rangle}_{D}(E^{\prime}\cup \{u\})
 &=&-\sum_{E^{\prime}\subseteq E}\sum_{D^{\prime}\subset D} (-1)^{|D\backslash D^{\prime}|}\; \beta^{\langle\mu\rangle}_{D^{\prime}}(E^{\prime}\cup \{u\})\\
 %&=&-\sum_{E^{\prime}\subseteq E}\sum_{\emptyset\neq D^{\prime}\subset D} (-1)^{|D\backslash D^{\prime}|}\; \beta^{\langle\mu\rangle}_{D^{\prime}}(E^{\prime}\cup \{u\})\\
 &=&-\sum_{D^{\prime}\subset D} (-1)^{|D\backslash D^{\prime}|}\; \sum_{E^{\prime}\subseteq E} \beta^{\langle\mu\rangle}_{D^{\prime}}(E^{\prime}\cup \{u\})\\
 &=&-\sum_{D^{\prime}\subset D} (-1)^{|D\backslash D^{\prime}|}\log RR_{u}(Y^{D^{\prime}}=1|E)\\
 &=&\log\prod_{D^{\prime}\subset D} RR_{u}(Y^{D^{\prime}}=1|E)^{(-1)^{|D\backslash D^{\prime}|+1}}\\
 &=&\log \mathcal{RR}_{u}(Y^{D}=1|E).
\end{eqnarray*}

\subsection*{Proof of Corollary~\ref{THM:independence.RR} }
%-------------------------------------------------
By Lemma~\ref{THM:lemma.beta.b}
\begin{eqnarray*}
\log RR_{u}(Y^{D}=1|E)
=\sum_{E^{\prime}\subseteq E}\beta^{\langle\mu\rangle}_{D}(E^{\prime}\cup \{u\})
\end{eqnarray*}
and it follows from Theorem~\ref{THM:independence.betamu}, the fact that $Y_{A}\ci Y_{B}|X_{U}$, and  Lemma~\ref{THM:lemma.beta.b} that
\begin{eqnarray*}
\log RR_{u}(Y^{D}=1|E)
=\sum_{E^{\prime}\subseteq E}\mathcal{B}^{\langle\mu\rangle}_{D}(E^{\prime}\cup \{u\})
=\log \mathcal{RR}_{u}(Y^{D}=1|E).
\end{eqnarray*}

\subsection*{Proof of Corollary~\ref{THM:independence.RRfff} }
%-------------------------------------------------
The fact that (\ref{EQN:no.effect.definition})$\Rightarrow$(\ref{EQN:THM.independence.RR.001})
follows immediately form Lemma~\ref{THM:lemma.beta.b}. To show the reverse implication, that is (\ref{EQN:THM.independence.RR.001})$\Rightarrow$(\ref{EQN:no.effect.definition})
we notice that in this case Lemma~\ref{THM:lemma.beta.b} implies
\begin{eqnarray}\label{EQN:corollary.last.001}
\sum_{E^{\prime}\subseteq E} \beta^{\langle\mu\rangle}_{D}(E^{\prime}\cup \{u\})=
\sum_{E^{\prime}\subseteq E} \mathcal{B}^{\langle\mu\rangle}_{D}(E^{\prime}\cup \{u\})
\mbox{ for every }E\subseteq U\backslash\{u\}
\end{eqnarray}
Hence, for $E=\emptyset$, (\ref{EQN:corollary.last.001}) implies
$\beta^{\langle\mu\rangle}_{D}(\{u\})=\mathcal{B}^{\langle\mu\rangle}_{D}(\{u\})$, and the proof follows by induction on the dimension of $E$.
We consider $E\subseteq U\backslash \{u\}$, with $E\neq\emptyset$,
and show that if the result is true for every $E^{\prime}\subset E$ then it also holds true for $E$. Indeed, in this case, it follows from (\ref{EQN:corollary.last.001}) that
\begin{eqnarray*}
\sum_{E^{\prime}\subset E} \mathcal{B}^{\langle\mu\rangle}_{D}(E^{\prime}\cup \{u\})+\beta^{\langle\mu\rangle}_{D}(E\cup \{u\})=
\sum_{E^{\prime}\subseteq E} \mathcal{B}^{\langle\mu\rangle}_{D}(E^{\prime}\cup \{u\})
\end{eqnarray*}
so that $\beta^{\langle\mu\rangle}_{D}(E\cup \{u\})=\mathcal{B}^{\langle\mu\rangle}_{D}(E\cup \{u\})$.

\subsection*{Proof of Lemma~\ref{THM:betagamma.betamu} }
%-------------------------------------------------
By definition, $\gamma=M_{V}^{\T}\log(\mu)$ so that $\gamma M_{U}=M_{V}^{\T}\log(\mu)M_{U}$ and the result follows by recalling that $\beta^{\langle\gamma\rangle}=\gamma M_{U}$ and $\beta^{\langle\mu\rangle}=\log(\mu) M_{U}$.

\subsection*{Proof of Corollary~\ref{THM:cor.ind.equal.beta.zero} }
%-------------------------------------------------
This is an immediate consequence of Proposition~\ref{THM:ind.equal.beta.zero}  and Lemma~\ref{THM:betagamma.betamu} because if, for $E\subseteq U$ it holds that $\beta^{\langle\mu\rangle}_{D^{\prime}}(E)=0$ for every $D^{\prime}\subseteq D$, then it follows from (\ref{EQN.betagamma.betamu}) that also $\beta^{\langle\gamma\rangle}_{D^{\prime}}(E)=0$ for every $D^{\prime}\subseteq D$. Conversely, it follows from Lemma~\ref{THM:betagamma.betamu} that $\beta^{\langle\mu\rangle}_{D}(E)=\sum_{D^{\prime}\subseteq D} \beta_{D^{\prime}}^{\langle\gamma\rangle}(E)$ and, therefore, if for $E\subseteq U$ it holds that $\beta^{\langle\gamma\rangle}_{D^{\prime}}(E)=0$ for every $D^{\prime}\subseteq D$ then also $\beta^{\langle\mu\rangle}_{D^{\prime}}(E)=0$ for every $D^{\prime}\subseteq D$.

\subsection*{Proof of Corollary~\ref{THM:gamma.interpretation} }
%-------------------------------------------------
This follows from Proposition~\ref{THM:gamma.equal.zero} that
\begin{eqnarray*}
\sum_{E^{\prime}\subseteq E}\beta^{\langle\gamma\rangle}_{D}(E)=\sum_{E^{\prime}\subseteq E}\beta^{\langle\mu\rangle}_{D}(E)-\sum_{E^{\prime}\subseteq E}\mathcal{B}^{\langle\mu\rangle}_{D}(E)
\end{eqnarray*}
and the result follows from Lemma~\ref{THM:lemma.beta.b}.
\section{Technical details on the analysis of the multimorbidity data}\label{sec.suppmat-app}
\subsection*{The single covariate case}\label{subsec:suppmatH}
The confidence intervals in Figure~\ref{fig:CI} for the average effects of HIV on $D$-response associations of same size $k=|D|$  are obtained from the weighted averages of the MLEs of the saturated LM and LML regression models  given in Table~\ref{tab.full_Hsat}. For every $k=1,2,3,4$ we set
$$
\bar{\beta}^{\langle \mu \rangle}_k(h)=\sum_{D \subseteq V:|D|=k}w_{D}\hat{\beta}_{D}^{\langle \mu \rangle}(h) \quad \textrm{and} \quad \bar{\beta}^{\langle \gamma \rangle}_k=\sum_{D \subseteq V:|D|=k}w_{D}\hat{\beta}_{D}^{\langle \gamma \rangle}(h),
$$
where $w_D$, for $\emptyset\neq D\subseteq V$ is a suitable collection of weights. Concretely, given the observed sample, we set $w_D$ equal to the number of patients  affected by disease pattern $D$ normalized within the group of patients affected by disease patterns of
the same size. Hence, each $w_D$ represents the observed proportion of patients affected by the multimorbidity event $Y^D$ within the subgroup of patients affected by multimorbidity events of the same size $|D|$ so that, for every $k$,  $\sum_{D \subseteq V:|D|=k}w_{D}=1$.
It is worth noticing that $\bar{\beta}^{\langle \mu \rangle}_1(h)=\bar{\beta}^{\langle \gamma \rangle}_1(h)$ because for univariate regressions with $k=1$ the LM and the LML coefficients are equivalent,
and $\bar{\beta}^{\langle \mu \rangle}_4(h)=\hat{\beta}^{\langle \mu \rangle}_V(h)$ and $\bar{\beta}^{\langle \gamma \rangle}_4(h)=\hat{\beta}^{\langle \gamma \rangle}_V(h)$  as $V$ is the only pattern of size four.
Table~\ref{tab.index} collects these average effects with their standard errors and, to better understand the role played by these weights, one can check from Table~\ref{tab.full_Hsat} that the weighted means are closer to the medians of the averaged regression coefficients rather than to their arithmetic means.
\begin{table}[t]
\centering \caption{Average HIV-effect on $D$-response asssociations of the same size $k=1,\dots,4$, under the saturated LML and LM regression model for $Y_V|X_h$.
Standard errors are included in brackets.} %for both $\bar{\beta}^{\langle \gamma \rangle}_k(h)$ and $\bar{\beta}^{\langle \mu \rangle}_k(h)$.
\small{
\begin{tabular}{l|rrrr}
\hline
%&          \multicolumn4c{$k$}\\\cline{2-5}
$k$&                                     \multicolumn1c{1}      &      \multicolumn1c{2}      &       \multicolumn1c{3}      &       \multicolumn1c{4}\\\hline
$\bar{\beta}^{\langle \gamma \rangle}_k(h)$ & $1.888$(0.064)  &   $-0.693$(0.203)  &   $-0.361$(0.460)   &   $0.742$(1.015) \\[1mm]
%s.e.                              & $0.064$  &   $0.203$   &   $0.460$    &   $1.015$\\\hline
$\bar{\beta}^{\langle \mu \rangle}_k(h)$    & $1.888$(0.064)  &   $3.272$(0.225)   &   $4.614$(0.337)    &   $4.143$(0.952)\\
%s.e.                              & $0.064$  &   $0.225$   &   $0.337$    &   $1.128$\\
\hline
\end{tabular} \label{tab.index}}
\end{table}

The procedure adopted for model selection exploits the upward compatibility property satisfied
by the LML parameterization which implies that every term $\gamma_D(\cdot)$ of  $Y_V|X_h$
can be computed in the relevant marginal distribution of $Y_D|X_h$, with $D \subseteq V$.
Then, starting from univariate LML regression models, a forward procedure is followed which updates step-by-step the regression model for response associations of higher order such that  models for  associations of lower order (selected in the relevant marginal distributions) are preserved. In details, the  procedure  is based on four ordered steps: at every step $i$, with $i=1,\dots,4$, a LML regression models for $Y_D|X_h$ is selected for every $D \subseteq V$ with $|D|=i$. At every step, the structure, that is the zero terms, of the models chosen at the previous steps is maintained.
Reduced models at each step are selected setting zero constraints on regression coefficients not significatively different from zero.
Table~\ref{tab.univa_H}, \ref{tab.biva_H} and \ref{tab.triv_H} give  the MLEs of the  LML regression models selected at step 1, 2 and 3 of the
procedure, respectively. The estimates of the final  model selected at step 4 are shown in Table~\ref{tab.full_H}.

It is worth noticing that the estimates  $\hat{\beta}_D^{\langle\gamma\rangle}(E)$ are in general very close for the same $D$ and $E$ across Tables~\ref{tab.full_H}, \ref{tab.univa_H}, \ref{tab.biva_H} and \ref{tab.triv_H}. They present some differences because are MLEs computed under different models, but they are similar because they estimate the same quantity. Indeed, as a  consequence of the upward compatibility property and of the chosen selection procedure, the parameters $\beta_D^{\langle\gamma\rangle}(E)$ take the same values in the different models considered.
\subsection*{The two-covariate case}\label{subsec:suppmatAH}
The LML regression model for the distribution of $Y_V|X_{ah}$ is selected using a backward stepwise procedure which, starting from the saturated model, defines a sequence of nested models with zero constraints provided by the set of regression coefficients which are not significatively
different from zero. The procedure is based on three ordered steps in which three nested models are fitted, shortly denoted as $M_1$, $M_2$ and $M_3$.
\begin{itemize}
\item \emph{step 1}: model $M_1$ with no covariate interactions, i.e. with $\beta_D^{\langle\gamma\rangle}(ah)=0$ for every $D \subseteq V$, is fitted with a deviance 8.01 on 15 degrees of freedom and $p$-value equal to 0.923. Under $M_1$, some regression coefficients are not significatively different from zero such that the following constraints could be specified for models of $D$-response asssociations of size four,
    \begin{itemize}
    \item[-]  $\beta_D^{\langle\gamma\rangle}(a)=\beta_D^{\langle\gamma\rangle}(h)=0$, for $D=\{b,c,d,r\}$,
    \end{itemize}
    of size three,
    \begin{itemize}
    \item[-] $\beta_D^{\langle\gamma\rangle}(a)=0$, for $D=\{c,d,r\}$,
    \item[-] $\beta_D^{\langle\gamma\rangle}(\emptyset)=\beta_D^{\langle\gamma\rangle}(a)=\beta_D^{\langle\gamma\rangle}(h)=0$, for $D \in \{\{b,c,d\}, \{b,c,r\}, \{b,d,r\}\}$,
    \end{itemize}
    and of size two,
    \begin{itemize}
     \item[-] $\beta_D^{\langle\gamma\rangle}(a)=0$, for $D \in \{\{b,c\}, \{c,r\}, \{d,r\}\}$,
    \item[-] $\beta_D^{\langle\gamma\rangle}(\emptyset)=\beta_D^{\langle\gamma\rangle}(a)=\beta_D^{\langle\gamma\rangle}(h)=0$, for $D=\{b,d\}$.
    \end{itemize}
\item \emph{step 2}: model $M_2$ is fitted including  zero constraints for models of $D$-response associations of size four and three shown under $M_1$. The resulting model has a deviance 29.00 on 27 degrees of freedom and $p$-value equal to 0.413. Under  $M_2$, regression coefficients which are not significatively different from zero show that the following constraints could be further specified for models of $D$-response associations of size four,
    \begin{itemize}
    \item[-] $\beta_D^{\langle\gamma\rangle}(\emptyset)=0$, for $D=\{b,c,d,r\}$,
    \end{itemize}
    and of size two,
     \begin{itemize}
     \item[-] $\beta_{bc}^{\langle\gamma\rangle}(a)=0$, for $D \in \{\{b,c\}, \{c,r\}\}$,
    \item[-] $\beta_{bd}^{\langle\gamma\rangle}(\emptyset)=\beta_{bd}^{\langle\gamma\rangle}(a)=\beta_{bd}^{\langle\gamma\rangle}(h)=0$, for $D=\{b,d\}$.
    \end{itemize}
\item \emph{step 3}: model $M_3$ is fitted including all zero constraints shown under  $M_2$. The resulting model has a deviance 33.00 on 33 degrees of freedom and $p$-value equal to 0.467. Regression coefficients under $M_3$ are all significatively different from zero and they are collected in Table~\ref{tab.full_AH} of the paper.
\end{itemize}
\begin{table}
\centering \caption{LML regression models for $Y_D|X_h$, for every $D \subseteq V$ with $|D|=1$. The table gives the MLEs of the regression coefficients with their standard error (s.e.) and $p$-value.}\label{tab.univa_H}
\small{
\begin{tabular}{l|rrrrrr}
\hline
$D$         &  $\hat{\beta}^{\langle\gamma\rangle}_{D}(\emptyset)$  & s.e. & $p$-value & $\hat{\beta}^{\langle\gamma\rangle}_{D}(h)$ & s.e.     & $p$-value \\
\hline
$\{b\}$    & -4.577  & 0.106  & $<\!.001$  & 2.625   &  0.116   & $<\!.001$        \\
$\{c\}$    & -4.480  & 0.101  & $<\!.001$  & 1.056   & 0.143    & $<\!.001$    \\
$\{d\}$    & -3.256  & 0.054  & $<\!.001$  & 1.062   & 0.076    & $<\!.001$      \\
$\{r\}$    & -6.344  & 0.257  & $<\!.001$  & 3.591   & 0.267    & $<\!.001$  \\
\hline
\end{tabular}}
\end{table}
\begin{table}
\centering \caption{The selected LML regression models for $Y_D|X_h$, for every $D \subseteq V$ with $|D|=2$. The table gives the MLEs of the regression coefficients with their standard error (s.e.) and $p$-value, the deviance of the selected models with their degrees of freedom (d.f.) and  $p$-value.}\label{tab.biva_H}
\small{
\begin{tabular}{l|rrrrrrrrrr}
\hline
$D$         &  $\hat{\beta}^{\langle\gamma\rangle}_{D}(\emptyset)$  & s.e. & $p$-value & $\hat{\beta}^{\langle\gamma\rangle}_{D}(h)$ & s.e. & $p$-value &Deviance & d.f. & $p$-value\\
\hline
$\{b\}$    & -4.576  & 0.106    & $<\!.001$    & 2.623     &  0.115     & $<\!.001$        \\
$\{c\}$    & -4.479  & 0.101    & $<\!.001$    & 1.056     & 0.143      & $<\!.001$      \\
$\{b,c\}$ & 0.472   & 0.181    & 0.009    & $\cdot$   & $\cdot$    & $\cdot$   & 1.103  & 1 & 0.294\\\hline
$\{b\}$    & -4.576  & 0.106    & $<\!.001$    & 2.624     &  0.115     & $<\!.001$        \\
$\{d\}$    & -3.256  & 0.054    & $<\!.001$    & 1.061     & 0.076      & $<\!.001$      \\
$\{b,d\}$ & $\cdot$ & $\cdot$  & $\cdot$  & $\cdot$   & $\cdot$    & $\cdot$   & 1.489  & 2  & 0.475\\\hline
$\{b\}$    & -4.577  & 0.106    & $<\!.001$    & 2.625     &  0.116     & $<\!.001$        \\
$\{r\}$    & -6.344  & 0.257    & $<\!.001$    & 3.592     & 0.266      & $<\!.001$  \\
$\{b,r\}$ & $\cdot$ & $\cdot$  & $\cdot$  & $\cdot$   & $\cdot$    & $\cdot$   & 0.344  & 2 & 0.842\\\hline
$\{c\}$    & -4.479  & 0.101    & $<\!.001$    & 1.056     & 0.143      & $<\!.001$  \\
$\{d\}$    & -3.256  & 0.054    & $<\!.001$    & 1.062     &  0.076     & $<\!.001$        \\
$\{c,d\}$ & 1.773   & 0.182    & $<\!.001$    & -1.166    & 0.270      & $<\!.001$   & $\cdot$ & $\cdot$  \\\hline
$\{c\}$    & -4.480  & 0.101    & $<\!.001$    & 1.057     & 0.143      & $<\!.001$    \\
$\{r\}$    & -6.344  & 0.257    & $<\!.001$    & 3.592     & 0.266      & $<\!.001$  \\
$\{c,r\}$ & $\cdot$ & $\cdot$  & $\cdot$  & $\cdot$   & $\cdot$    & $\cdot$   & 2.946  & 2 & 0.229\\\hline
$\{d\}$    & -3.255  & 0.054    & $<\!.001$    & 1.060     &  0.076     & $<\!.001$        \\
$\{r\}$    & -6.336  & 0.256    & $<\!.001$     & 3.583     & 0.266      & $<\!.001$  \\
$\{d,r\}$ & 0.846   & 0.116    & $<\!.001$    &$\cdot$   & $\cdot$    & $\cdot$    & 1.764 & 1 & 0.184\\\hline
\end{tabular}}
\end{table}
\begin{table}
\centering \caption{The selected LML regression models for $Y_D|X_h$, for every $D \subseteq V$ with $|D|=3$. The table gives the MLEs of the regression coefficients with their standard error (s.e.) and $p$-value, the deviance of the selected models with their degrees of freedom (d.f.) and  $p$-value.}\label{tab.triv_H}
\small{
\begin{tabular}{l|rrrrrrrrrr}
\hline
$D$         &  $\hat{\beta}^{\langle\gamma\rangle}_{D}(\emptyset)$  & s.e. & $p$-value & $\hat{\beta}^{\langle\gamma\rangle}_{D}(h)$ & s.e. & $p$-value &Deviance & d.f & $p$-value  \\
\hline
$\{b\}$    & -4.574  & 0.106    & $<\!.001$    & 2.621     &  0.115     & $<\!.001$        \\
$\{c\}$    & -4.477  & 0.101    & $<\!.001$    & 1.056     & 0.143      & $<\!.001$      \\
$\{d\}$    & -3.255  & 0.054    & $<\!.001$    & 1.061     & 0.075      & $<\!.001$      \\
$\{b,c\}$ & 0.479   & 0.179    & 0.009    & $\cdot$   & $\cdot$    & $\cdot$      \\
$\{b,d\}$ & $\cdot$ & $\cdot$  & $\cdot$  & $\cdot$   & $\cdot$    & $\cdot$      \\
$\{c,d\}$ & 1.775   & 0.182    & $<\!.001$    & -1.166    & 0.268      & $<\!.001$       \\
$\{b,c,d\}$& $\cdot$ & $\cdot$  & $\cdot$  & $\cdot$   & $\cdot$    & $\cdot$   & 3.806 & 5   & 0.578 \\\hline
$\{b\}$    & -4.575  & 0.106    & $<\!.001$    & 2.620     &  0.115     & $<\!.001$        \\
$\{c\}$    & -4.478  & 0.101    & $<\!.001$    & 1.058     & 0.143      & $<\!.001$  \\
$\{r\}$    & -6.347  & 0.254    & $<\!.001$    & 3.596     & 0.263      & $<\!.001$  \\
$\{b,c\}$ & 0.439   & 0.183    & 0.016    & $\cdot$   & $\cdot$    & $\cdot$      \\
$\{b,r\}$ & $\cdot$ & $\cdot$  & $\cdot$  & $\cdot$   & $\cdot$    & $\cdot$   \\
$\{c,r\}$ & $\cdot$ & $\cdot$  & $\cdot$  & $\cdot$   & $\cdot$    & $\cdot$\\
$\{b,c,r\}$ & 2.910  & 0.345    & $<\!.001$    & -2.250    & 0.541      & $<\!.001$   & 4.643  & 5 & 0.461 \\\hline
$\{b\}$    & -4.576  & 0.106    & $<\!.001$    & 2.624     &  0.115     & $<\!.001$        \\
$\{d\}$    & -3.255  & 0.054    & $<\!.001$    & 1.061     & 0.075      & $<\!.001$      \\
$\{r\}$    & -6.344  & 0.253    & $<\!.001$    & 3.580     & 0.263      & $<\!.001$  \\
$\{b,d\}$ & $\cdot$ & $\cdot$  & $\cdot$  & $\cdot$   & $\cdot$    & $\cdot$      \\
$\{b,r\}$ & $\cdot$ & $\cdot$  & $\cdot$  & $\cdot$   & $\cdot$    & $\cdot$    \\
$\{d,r\}$ & 0.853   & 0.115    & $<\!.001$    &$\cdot$   & $\cdot$    & $\cdot$     \\
$\{b,d,r\}$ & 1.258 & 0.302     & $<\!.001$    & -0.940   & 0.369    & 0.011   & 3.505  & 5 & 0.623 \\\hline
$\{c\}$    & -4.479  & 0.101    & $<\!.001$    & 1.058     & 0.143      & $<\!.001$  \\
$\{d\}$    & -3.255  & 0.054    & $<\!.001$    & 1.058     &  0.076     & $<\!.001$        \\
$\{r\}$    & -6.336  & 0.254    & $<\!.001$    & 3.584     & 0.264      & $<\!.001$  \\
$\{c,d\}$ & 1.775   & 0.181    & $<\!.001$    & -1.233    & 0.272      & $<\!.001$   \\
$\{c,r\}$ & $\cdot$ & $\cdot$  & $\cdot$  & $\cdot$   & $\cdot$    & $\cdot$\\
$\{d,r\}$ & 0.831   & 0.117    & $<\!.001$    &$\cdot$   & $\cdot$    & $\cdot$\\
$\{c,d,r\}$ & $\cdot$ & $\cdot$  & $\cdot$  & $\cdot$   & $\cdot$    & $\cdot$   & 5.039  & 5 & 0.411 \\\hline
\end{tabular}}
\end{table}
\newpage
\bibliographystyle{chicago}
\bibliography{gamma-ref}

\end{document}